\documentclass[submission,copyright,creativecommons]{eptcs}
\providecommand{\event}{AiML 2026} 

\usepackage{aiml26}

\usepackage{iftex}

\usepackage{graphicx} \usepackage{amsfonts,amsmath,amssymb, amsthm, framed, mathtools, mathrsfs, float, stmaryrd, booktabs}
\usepackage[dvipsnames]{xcolor}
\usepackage{tikz-cd}
\usepackage{tikz}
\usetikzlibrary{arrows.meta}
\usepackage{cleveref}
\usepackage[numbers]{natbib}

\newcommand{\E}{\mathcal{E}}
\newcommand{\F}{\mathcal{F}}

\newcommand{\M}{\mathcal{M}}
\newcommand{\N}{\mathcal{N}}
\newcommand{\ra}{\rightarrow}
\newcommand{\Lang}{\mathcal{L}}

\ifpdf
\usepackage{underscore}         \usepackage[T1]{fontenc}        \else
\usepackage{breakurl}           \fi

\title{Belief Contraction in Dynamic Epistemic Logic}
\author{Gaia Belardinelli
	\institute{Department of Philosophy\\
		Stanford University\\
		Stanford, USA}
	\email{gaiabel@stanford.edu}
	\and
	Snow Zhang 
	\institute{Department of Philosophy\\
		University of California, Berkeley\\
		Berkeley, USA}
	\email{\quad snowzhang@berkeley.edu}
}

\newcommand{\titlerunning}{Belief Contraction in Dynamic Epistemic Logic}
\newcommand{\authorrunning}{G. Belardinelli \& S. Zhang}

\hypersetup{
	bookmarksnumbered,
	pdftitle    = {\titlerunning},
	pdfauthor   = {\authorrunning},
	pdfsubject  = {EPTCS},               	pdfkeywords = {keyword1, keyword2} }

\begin{document}
	\maketitle
\begin{abstract}
Dynamic epistemic logic (\textsf{DEL}) represents belief change via model transformations induced by epistemic events. Its standard formulation (Baltag, Moss, Solecki, 1998) provides a natural account of belief expansion through the elimination of possibilities, but it cannot model belief contraction about factual propositions. A classic response enriches Kripke models with plausibility orderings, representing contraction as an update that promotes certain possibilities over others. We show that this approach has expressive limitations. In particular, the approach cannot model belief that violates positive introspection and contraction dynamics in response to a \textit{hedged} public announcement that $\varphi$ \textit{might} be false. 
Motivated by these considerations, we introduce a 
mechanism for belief contraction defined directly on standard Kripke models, without 
any constraints on the doxastic accessibility relation. We show that it satisfies some of the standard properties of belief contraction but not others, study the conditions under which contraction may be unsuccessful, and provide a sound and complete axiomatization of the logic via reduction axioms. We also define a more general dynamic logic that is an extension of standard \textsf{DEL} and accommodates belief contractions due to events such as private or semi-private announcements, and provide a complete and sound axiomatization of the general logic.
\end{abstract}

	\section{Introduction}
Dynamic epistemic logic (\textsf{DEL}) is a family of logics that model multi-agent belief change by incorporating information revealed by epistemic events—such as public or private announcements—into agents’ epistemic states.
In standard \textsf{DEL} \cite{baltag1998logic}, this incorporation proceeds by \textit{eliminating} all possibilities that are incompatible with the disclosed information.
As a result, standard \textsf{DEL} can't straightforwardly model announcements that make agents reconsider possibilities they previously ruled out as impossible, as required for belief contraction. A popular solution is to distinguish ``hard'' from ``soft'' updates; while hard updates involve \textit{eliminating} possibilities, soft updates involve \textit{promoting} possibilities in terms of their comparative \textit{plausibility} \cite{benthem2007dynamic,baltag2008qualitative}. Then, belief contraction about a factual proposition $p$ consists in coming to judge that a $\neg p$-possibility is at least as plausible as the most plausible $p$-possibility, thereby losing the belief that $p$ holds.

While the plausibility-based approach can model belief contraction \cite{baltag2008qualitative,fiutek2013}, it has certain expressive limitations. First, since plausibility orderings are transitive, the framework validates positive introspection for beliefs (axiom \textbf{4}), and so can't model agents who are mistaken about their own beliefs. Second, the approach can't accommodate belief contractions due to \textit{hedged public announcements}---announcements that $p$ \textit{might be false}. Intuitively, such an announcement could cause an agent who initially believes $p$ to suspend judgment on $p$ without gaining any new (factual) beliefs. We show that there is a precise sense in which such dynamics \textit{can't} be modeled in the plausibility framework. 

These limitations motivate the search for a new model of belief contraction. We offer such a model in this paper. We introduce a dynamic operation of belief contraction on standard, multi-agent Kripke models of beliefs, with no assumptions on the doxastic accessibility relations. Roughly, a hedged announcement that \textit{$\varphi$ might be false} triggers a contraction on $\varphi$, which involves adding $\neg\varphi$-possibilities to an agent's belief set if she believes $\varphi$, and doing nothing otherwise. We define a sound and complete logic for this contraction modality, which we call \textit{hedged public announcement logic (\textsf{HPAL})}, and prove that this update satisfies a number of natural properties of belief contraction. 
We then define a more general dynamic logic \textsf{GDEL} that conservatively extends standard \textsf{DEL} to also accommodate belief contractions due to events such as private or semi-private announcements. We show that it contains \textsf{HPAL} and that its updates are always a simulation of a refinement of an initial Kripke model. Finally, we provide a complete and sound axiomatization of the general logic.

\noindent\textbf{Related literature.} The logic \textsf{GDEL} can be viewed as a natural extension of standard \textsf{DEL}. As contraction is a form of information-loss, our model is also related to logics of \textit{forgetting} and simulation modal logic \cite{ditmarsch2025simulation,ditmarsch2009introspective,fernandez-duque2015forgetting}, as well as to the fragment of graph modifier logic that adds edges to Kripke models \cite{aucher2009global}. Our theory however contrasts with the theory of belief contraction as studied in AGM \cite{alchurron1985logic}; our contraction operator does not satisfy many of the AGM postulates. This is to be expected, as dynamic contraction involves transformations that could change the truth-values of epistemic formulas, which is not considered in AGM. Philosophically, the problem of belief contraction due to hedged public announcement is related to the problem of evaluating indicative conditionals that have modal antecedents, which remains an open problem \citep{yalcin2007epistemic, holliday2017indicative, vanBenthem2023-VANTLO-47}. 

The proofs of main theorems can be found in the Appendix.

	\section{Standard \textsf{DEL} framework}\label{sec:standardDEL}
	We begin by reviewing the basics of \textsf{DEL} and fix notations.\footnote{For presentation purposes, in this section we only introduce the basic doxastic language without dynamic modalities. We introduce the full language of \textsf{DEL} in the last section, where we also use it. For an introduction to \textsf{DEL}, see \cite{sep-dynamic-epistemic}.} 
Let $At$ be a countable set of atomic formulas and $Ag$ be a finite set of agents. Let $\Lang$ be the \textit{doxastic language} generated from $At$ by the following grammar:
		$\varphi\coloneqq \top\mid p\mid\neg\varphi\mid\varphi\wedge\varphi\mid B_a\varphi$,
	where $p\in At$ and $a\in Ag$. We write $\bot$ for $\neg\top$ and $\widetilde{B}_a\varphi$ for $\neg B_a\neg\varphi$. A \textit{literal} is an atomic formula $p$ or its negation $\neg p$. Let $\bigwedge_{\varphi\in S}\varphi=\top$ if $S$ is empty.

We adopt the standard semantics of epistemic logic, where a Kripke model is a triple $\M=(W, R, V)$ with $W\not=\emptyset$ a set of worlds, $R_a\subseteq(W\times W)$ an accessibility relation defined for all $a\in Ag$, and $V: At\ra\wp(W)$ is a valuation function. We use notation $R_a(w)=\{v\in W\colon (w,v)\in R_a\}$ to represent \textit{the set of possible worlds compatible with agent $a$'s belief at $w$}.

The satisfaction relation between a Kripke model $\M=(W,R,V)$ with world $w\in W$ and a formula $\varphi\in\Lang$ is defined as standard, where in particular the semantics of the belief modality is given by:
    \vspace{-5pt}
		\begin{table}[H]
			    \centering
			    \begin{tabular}{lll}
				$\M, w\vDash B_a \varphi$ & iff & $\M, v\vDash\varphi$ for all $v\in R_a(w)$.
			\end{tabular}
			\end{table}
    \vspace{-10pt}
\noindent A formula $\varphi$ is \textit{valid in a model $\M$} (notation: $\M\vDash \varphi$), if for all worlds $w$ in $\M$, $\M,w\vDash \varphi$. A formula $\varphi$ is \textit{valid in a class of models $\mathfrak{L}$} (notation: $\vDash_\mathfrak{L} \varphi$), if for all models $\M$ in $\mathfrak{L}$, $\varphi$ is valid in $\M$. 

	    A \textit{standard event model} is a triple $\E=(E, Q, pre)$ where $E\not=\emptyset$ is a finite set of events, $Q_a\subseteq(E\times E)$ is an accessibility relation between events, for all $a\in Ag$, and $pre: E\ra\Lang$ a precondition function.\footnote{In the most general form of \textsf{DEL}, the precondition function assigns to each event a formula from the dynamic language (i.e. $\Lang$ extended with dynamic formulas). We keep it simple here and consider this general \textsf{DEL} form in the last section.} 
We use notation $Q_a(e)=\{f\in E\colon (e,f)\in Q_a\}$ for the set of events that are considered possible by $a$ at $e$. 

For a Kripke model $\M=(W, R, V)$ and a standard event model $\E=(E, Q, pre)$, the \textit{standard product update} of $\M$ with $\E$ is the Kripke model  $\mathcal{M} \otimes \mathcal{E} = (W^\E,R^\E,V^\E)$ where $W^\E=\{(w,e)\in W\times E\colon w\vDash pre(e)\}$, $R^\E_a(w,e)=\{(v,f)\in W^\E\colon v\in R_a (w)\text{ and }  f\in Q_a(e)\}$, 
and $V^\E(p)=\{(w,e)\in W^\E\colon w\in V(p)\}$, for all $p\in At$.
	\noindent Informally, a world $(w,e)$ represents \textit{the state of affairs at $w$ after $e$ occurs}. We call these event models and product updates \textit{standard }to distinguish them from the framework introduced later.

It is well known that standard \textsf{DEL} has undesirable consequences when agents are presented with information that contradicts their beliefs \cite{benthem2007dynamic,herzig2017dynamic}. For a concrete illustration:
\begin{example}\label{ex:alice1}
Alice believes that she locked her office door before she left. She runs into Bob, who tells her that \textit{her office door was open}. Alice trusts Bob and revises her beliefs accordingly.
\end{example}
\noindent Let $p$ denote the proposition ``Alice's office door is open''. Figure \ref{fig:DELlimitation} illustrates that, if we model Alice's belief revision using a standard product update, then she will end up having inconsistent beliefs.
\vspace{-5pt}
\begin{figure}[H]
\begin{minipage}{0.33\textwidth}
    \centering    \begin{tikzpicture}
         \node[label=left:$\M$] (A) {\begin{tikzpicture}[node distance=2cm, ->, >=Stealth, thick]
        \tikzstyle{state} = [circle, draw, thick, minimum size=0.65cm, inner sep=0pt, align=center]
            \node[state, label=below:$w$] (w) {\(\neg p\)};
  \node[state, right of=w, label=below:$v$] (v) {\(p\)};
        \path[->](v) edge node[below] {$a$} (w)
 (w) edge[loop left] node[left] {$a$} (w);
        \end{tikzpicture}};
    \end{tikzpicture}
\end{minipage}
\begin{minipage}{0.33\textwidth}
    \centering    \begin{tikzpicture}
         \node[label=left:$\E$] (A) {\begin{tikzpicture}[node distance=2cm, ->, >=Stealth, thick]
        \tikzstyle{event} = [rectangle, draw, minimum size=0.6cm, inner sep=0pt, align=center, thick]
            \node[event, label=below:$e$] (e) {\(p\)};
    \path[->] (e) edge[loop right] node[right] {$a$} (e);
        \end{tikzpicture}};
    \end{tikzpicture}
\end{minipage}
\hspace{-1cm}
\begin{minipage}{0.33\textwidth}
    \centering
\begin{tikzpicture}
         \node[label=left:$\M\otimes\E$] (N) {\begin{tikzpicture}[node distance=2cm, ->, >=Stealth, thick]
        \tikzstyle{state} = [circle, draw, minimum size=0.6cm, inner sep=0pt, align=center, thick]
\node[state, label=below:${(v,e)}$] (v) {\(p\)};
        \end{tikzpicture}};
\end{tikzpicture}
\end{minipage}
\vspace{-10pt}
        \caption{\textbf{Left}:         The Kripke model $\M=(W,R,V)$ representing 
        Alice's
        initial belief state. An edge from a world $v$ to a world $w$ means $w\in R_a(v)$. We have $\M\vDash B_a\neg p\wedge \neg B_a p$. 
        \textbf{Middle}: The standard event model $\E=(E,Q,pre)$ representing Bob's announcement. The event $e$ contains its own precondition $pre(e)=p$. The loop on $e$ means $e\in Q_a(e)$.
        \textbf{Right}: 
        The Kripke model $\M\otimes\E=(W^\E,R^\E,V^\E)$ representing 
        Alice's belief state after the update, 
        where $R^\E_a((v,e))=\emptyset$ and so e.g. $\M\otimes\E        \vDash B_ap\wedge B_a\neg p$.}
    \label{fig:DELlimitation}
\end{figure}

\section{Plausibility models}
The ``standard diagnosis'' of the problem of modeling belief contraction in standard \textsf{DEL} is that we need a richer view of beliefs \cite{benthem2007dynamic}. In particular, we need a framework where an agent believes not just any proposition entailed by her information,
but those that are true at the `best’ or `most plausible' possibilities compatible with her information \cite{baltag2008qualitative,benthem2007dynamic}. I believe that I am not dreaming. It is not \textit{impossible} that I am, but those possibilities are less \textit{plausible} than those in which I am awake. Similarly, one interpretation of Alice's initial doxastic situation is that she believes that her office is closed, not because she has completely ruled out the possibility that it is open, 
but only that she judges that possibility to be relatively implausible. And it is this relative plausibility judgment that gets revised by Bob's testimony.

This idea is formalized by modeling belief in terms of plausibility orderings over possible worlds.
Formally, a \emph{plausibility order on a set $S$} is a connected,  reflexive and transitive relation $\trianglelefteq\subseteq S\times S$ such that every non-empty subset $S'\subseteq S$ has at least one $\trianglelefteq$-minimal element, i.e. $Min_\trianglelefteq (S')=\{w\in S': \forall v\in S', w\trianglelefteq v\}\neq\emptyset$.   \footnote{Note that we assume the plausibility order to be \textit{connected}. This is more restrictive than usual \cite{baltag2008qualitative}, but it is justified here as our framework only models belief and omits knowledge.}
A \emph{plausibility model} is a Kripke model $(W,\preceq, V)$ with $\preceq_a$ a plausibility order on $W$, for all agents $a\in Ag$. We write $\prec_a$ for its asymmetric component, i.e.~$w\prec_a v$ if $w\preceq_a v$ and $v\not\preceq_a w$. The semantics of $\Lang$ over plausibility models is standard for propositional formulas, but belief is interpreted differently from above, capturing that the agent believes what is the case at the most plausible worlds:
$$\M, w\vDash B_a \varphi  \text{ iff } \M, v\vDash\varphi \text{ for all } v\in Min_{\preceq_a}(W).$$
A  \textit{plausibility event model} is an event model $\E=(E, \leq, pre)$ with $\leq_a$ a plausibility order on $E$ for all $a\in Ag$. We write $<_a$ for its asymmetric component. As in standard \textsf{DEL}, event models update plausibility models via product update. The \textit{plausibility-update} of $\M=(W, \preceq, V)$ with $\E$ is the plausibility model $\M\ast\E=(W^\E,\preceq^\E,V^\E)$ where $W^\E$ and $V^\E$ are defined as in standard product updates and $\preceq^\E$ is given by $(w,e)\preceq_a^\E(v,f) \text{ iff } e<_af, \text{ or } e\equiv_a f  \text{ and }w\preceq_a v$.
Figure \ref{fig:PlausibilityLimitation} shows how to model Example \ref{ex:alice1} via a plausibility update. Alice's change of mind is captured as a \textit{soft} update where no world is ruled out as impossible, but their relative plausibility is switched.\vspace{-4pt}
\begin{figure}[H]
\begin{minipage}{0.33\textwidth}
    \centering    \begin{tikzpicture}
         \node[label={[label distance=-2mm]left:$\M$}] (A) {\begin{tikzpicture}[node distance=2cm, ->, >=Stealth, thick]
        \tikzstyle{state} = [circle, draw, thick, minimum size=0.6cm, inner sep=0pt, align=center]
            \node[state, label=below:$w$] (w) {\(\neg p\)};
  \node[state, right of=w, label=below:$v$] (v) {\(p\)};
   \path[->](v) edge node[below] {$a$} (w)
(w) edge[loop left] node[below,xshift=-3pt] {$a$} (w)
(v) edge[loop right] node[below,xshift=3pt] {$a$} (v);
        \end{tikzpicture}};
    \end{tikzpicture}
\end{minipage}
\begin{minipage}{0.33\textwidth}
    \centering    \begin{tikzpicture}
         \node[label={[label distance=-2mm]left:$\E$}] (A) {\begin{tikzpicture}[node distance=2cm, ->, >=Stealth, thick]
        \tikzstyle{event} = [rectangle, draw, minimum size=0.6cm, inner sep=0pt, align=center, thick]
            \node[event, label=below:$e$] (e) {\(p\)};
            \node[event, label=below:$f$, left of=e] (f) {\(\neg p\)};
\path[->] (f) edge node[below] {$a$} (e)
(e) edge[loop right] node[below,xshift=3pt] {$a$} (e)
(f) edge[loop left] node[below,xshift=-3pt] {$a$} (f);
        \end{tikzpicture}};
    \end{tikzpicture}
\end{minipage}
\begin{minipage}{0.33\textwidth}
    \centering
\begin{tikzpicture}
         \node[label={[label distance=-2mm]left:$\M\ast\E$}] (C) {\begin{tikzpicture}[node distance=2cm, ->, >=Stealth, thick]
        \tikzstyle{state} = [circle, draw, thick, minimum size=0.6cm, inner sep=0pt, align=center]
            \node[state, label=below:${(w,f)}$] (w) {\(\neg p\)};
  \node[state, right of=w, label=below:${(v,e)}$] (v) {\( p\)};
   \path[->]
 (w) edge node[below] {$a$} (v)
(w) edge[loop left] node[below,xshift=-3pt] {$a$} (w)
(v) edge[loop right] node[below,xshift=3pt] {$a$} (v);
        \end{tikzpicture}};
    \end{tikzpicture}
\end{minipage}
\vspace{-10pt}
        \caption{\textbf{Left}: The plausibility model $\M=(W,\preceq,V)$ representing Alice's initial belief state. An edge from $v$ to $w$ means         $w\preceq_a v$. 
        We have $\M\vDash B_a \neg p\wedge \neg B_a p$. \textbf{Middle}: The plausibility event model $\E=(E,\le,pre)$ capturing a soft update that $p$ holds.  
                An edge from $f$ to $e$ means $e\le_af$. 
        \textbf{Right}: The plausibility model $\M\ast\E$ representing Alice's belief state after updating,
        with $\M\ast\E\vDash B_a p\wedge \neg B_a\neg p$.}
    \label{fig:PlausibilityLimitation}
\end{figure}
\subsection{Limitations of plausibility models}
One limitation of the plausibility model of beliefs is that it validates axiom \textbf{4}: $B_a\varphi\rightarrow B_aB_a\varphi$. For this reason, it cannot model agents who are mistaken about their own beliefs. Such cases, however, seem possible: an employer might profess that they believe in gender equality while consistently favoring one gender group in hiring.
One way of describing this employer is that they in fact believe that candidates from one gender group are more qualified, even though they (falsely) believe that they don't believe that.

Another limitation of the approach, mentioned above, concerns belief contraction that does not involve adopting any new factual beliefs. 
A variant of Example~\ref{ex:alice1} illustrates what we have in mind:
\begin{example}\label{eg:mightopen} As before, Alice believes that she locked the door to her office before she left. She runs into Bob, who tells her that \textit{her office door might be open}. Given Bob's testimony, Alice comes to suspend judgment on whether she locked the door to her office before she left.
\end{example}
\noindent Let $p$ denote the proposition \textit{Alice's office door is open}. Alice's belief states before and after the update 
can be modeled by $\M$ and $\N$ in Figure~\ref{fig:limitations}, respectively. \vspace{-4pt}
\begin{figure}[H]
\begin{minipage}{0.48\textwidth}
    \centering    \begin{tikzpicture}
         \node[label=left:$\M$] (A) {\begin{tikzpicture}[node distance=2cm, ->, >=Stealth, thick]
        \tikzstyle{state} = [circle, draw, minimum size=0.6cm, inner sep=0pt, align=center]
            \node[state, label=below:$w$] (w) {\(\neg p\)};
  \node[state, right of=w, label=below:$v$] (v) {\(p\)};
   \path[->] (v) edge node[below] {$a$} (w)
   (w) edge[loop left] node[left] {$a$} (w)
   (v) edge[loop right] node[right] {$a$} (v);
        \end{tikzpicture}};
    \end{tikzpicture}
\end{minipage}
\begin{minipage}{0.48\textwidth}
    \centering
\begin{tikzpicture}
         \node[label=left:$\N$] (N) {\begin{tikzpicture}[node distance=2cm, ->, >=Stealth, thick]
        \tikzstyle{state} = [circle, draw, minimum size=0.6cm, inner sep=0pt, align=center]
\node[state, label=below:$w$] (w) {\(\neg p\)};
  \node[state, right of=w, label=below:$v$] (v) {\(p\)};
  \path[<->] (v) edge node[below] {$a$} (w);
  \path[->]
(w) edge[loop left] node[left] {$a$} (w)
(v) edge[loop right] node[right] {$a$} (v);
        \end{tikzpicture}};
\end{tikzpicture}
\end{minipage}
\vspace{-8pt}
        \caption{\textbf{Left}: The         plausibility model $\M=(W,\preceq,V)$ representing Alice's initial belief state, where $\M\vDash B_a \neg p \wedge \neg B_a p$. Conventions for edges are as in Fig.~\ref{fig:PlausibilityLimitation}. \textbf{Right}: The 
                plausibility model $\N=(W,\preceq^\N,V)$ representing Alice's belief state after updating on Bob's testimony, where $\N\vDash \neg B_a p \wedge \neg B_a\neg p$.}
    \label{fig:limitations}
\end{figure}
\noindent 
However, it turns out that this update \textit{cannot} be modeled as a plausibility update. 
More precisely, let $\mathcal{K} \equiv \mathcal{K}'$ denote modal equivalence between two models $\mathcal{K}$ and $\mathcal{K}'$ with respect to $\Lang$.\footnote{Recall that two Kripke (or plausibility) models $\mathcal{K}=(W,R,V)$ and $\mathcal{K}'=(W',R',V')$ are \emph{modally equivalent} with respect to $\Lang$ if for every formula $\varphi \in \Lang$ and every state $w \in W$, there exists a state $w' \in W'$ such that $\mathcal{K},w \vDash \varphi$ iff $\mathcal{K}',w' \vDash \varphi$, and conversely, for every $w' \in W'$ there exists a state $w \in W$ with the same property.} Then:
\begin{theorem}\label{prop:nome}
	   There is no plausibility event model $\E$ such that $\M\ast\E\equiv\N$.
\end{theorem}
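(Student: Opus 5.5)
The plan is to find one formula of $\Lang$ that separates $\M\ast\E$ from $\N$ for every plausibility event model $\E$. The candidate is $B_a p\vee B_a\neg p$. In the plausibility semantics, $B_a$ is evaluated at $Min_{\preceq_a}$ of the whole domain, so its truth value is the same at every world of a model. In $\N$ we have $Min_{\preceq^\N_a}(W)=\{w,v\}$, since $w$ and $v$ are equally plausible. Hence $\N\vDash\neg B_a p\wedge\neg B_a\neg p$ at every world. It therefore suffices to show that $B_a p\vee B_a\neg p$ holds at every world of $\M\ast\E$, whatever $\E$ is. Then no world of $\M\ast\E$ matches any world of $\N$ on this formula, and $\M\ast\E\not\equiv\N$. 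If $W^\E=\emptyset$, then $\M\ast\E$ is not a model, and there is nothing to prove.

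The key step is to show that all $\preceq^\E_a$-minimal worlds of $\M\ast\E$ agree on $p$.

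\begin{itemize}
\item Let $E^\ast=\{e\in E:\exists x\in W,\ \M,x\vDash pre(e)\}$ be the set of realized events. It is finite and non-empty, so it has a non-empty set $E_0$ of $\le_a$-minimal elements. By connectedness, all elements of $E_0$ are $\equiv_a$-equivalent.
\item By the lexicographic definition of $\preceq_a^\E$, a pair $(x,e)$ is minimal in $W^\E$ iff $e\in E_0$ and $x$ is $\preceq_a$-minimal in $W_0=\{y:\exists e\in E_0,\ (y,e)\in W^\E\}$. Since $W_0\subseteq\{w,v\}$ and $w\prec_a v$ in $\M$, there are two cases. If $w\in W_0$, the minimal worlds all have first coordinate $w$, so they all satisfy $\neg p$. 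Otherwise $W_0=\{v\}$, and the minimal worlds all satisfy $p$.
\item Since $V^\E$ copies the valuation of the first coordinate, $\M\ast\E\vDash B_a\neg p$ in the first case and $\M\ast\E\vDash B_a p$ in the second, globally.
\end{itemize}

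The point of the argument is that the plausibility update cannot create a tie between $w$ and $v$ that $\M$ did not already have. Within a single $\equiv_a$-class of events, the order on worlds is inherited from $\M$, where $w\prec_a v$ is strict. Across different event classes, the strict order on events decides.

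The step that needs the most care is the minimality analysis. It must check that ties between events never produce a tie between $w$ and $v$, which relies on the case where $e\equiv_a f$ falling back on the strict order $w\prec_a v$. It must also use the existence of minimal elements, guaranteed by finiteness and the well-foundedness clause of the definition. The rest is bookkeeping against the definition of modal equivalence in the footnote: every world of $\N$ falsifies $B_a p\vee B_a\neg p$, while every world of $\M\ast\E$ satisfies it, so no world of $\N$ is matched.
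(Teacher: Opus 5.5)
Your proof is correct, and it takes a different, more direct route than the paper's. The paper argues by contradiction. Assuming $\M\ast\E\equiv\N$, it starts from an arbitrary $p$-world $(v,f)$ and uses $\neg B_ap$ to find a $\neg p$-world $(w,e)\preceq^\E_a(v,f)$. It matches $(w,e)$ with $(\N,w)$ and uses $\widetilde{B}_ap$ there to find $(v,f')\preceq^\E_a(w,e)$. Since $v\not\preceq_a w$, this forces $f'<_ae\le_a f$. So every $p$-world has a strictly more plausible $p$-world, and the resulting infinite descending chain contradicts the existence of minimal elements. You instead compute $Min_{\preceq^\E_a}(W^\E)$ directly from the lexicographic definition and show that all its elements share a single first coordinate. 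Hence $B_ap\vee B_a\neg p$ holds throughout every $\M\ast\E$ while it fails throughout $\N$.

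The underlying mechanism is the same in both arguments: ties between events fall back on the strict order $w\prec_a v$, and a strict order between events never creates a tie. Your version yields more, though:
\begin{itemize}
\item an explicit formula separating $\M\ast\E$ from $\N$, uniformly in $\E$;
\item a positive fact about plausibility updates of $\M$: after any of them the agent either believes $p$ or believes $\neg p$;
\item independence from how the quantifiers in the footnote's definition of modal equivalence are read, since one formula separates the models at every world. The paper's proof needs the pointwise reading to transfer $\neg B_ap$ from $(\N,v)$ to $(v,f)$.
\end{itemize}
The paper's proof, by contrast, never identifies the minimal set. It uses only that suitably placed $p$- and $\neg p$-worlds exist and then appeals to well-foundedness. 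That is less informative, but it makes less explicit use of the fact that $B_a$ is evaluated at the global minimum of the order.
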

	\begin{proof}
	Let $\M=(W, \preceq, V)$ and $\N=(W, \preceq^\N, V)$ be as depicted in Figure \ref{fig:limitations}. Suppose towards a contradiction that there is a plausibility event model $\E=\{E, \leq, pre\}$ such that $\M\ast\E=(W^\E, \preceq^\E, V^\E)$ is a plausibility model that is modally equivalent to $\N$. We claim that, for any $(v,f)\in W^\E$, there exists a $(v,f')\neq (v,f)$ such that $(v,f')\prec_a^\E(v,f)$. It follows then that $\M\ast\E$ contains an infinite descending chain, which contradicts the assumption that it is a plausibility model. 
Let $(v,f)\in W^\E$. By assumption, $(\M\ast\E, (v,f))$ is modally equivalent to $(\N, v)$. Since $(\N, v)\vDash \neg B_a p$ , we have $(\M\ast\E, (v,f))\vDash\neg B_a p$. So there exists $(w,e)\in W^\E$ such that $(w,e)\preceq_a^\E(v,f)$. For this $(w,e)$, it must be that $(\M\ast\E, (w,e))$ is modally equivalent to $(\N, w)$ (since it can't be modally equivalent to $(\N, v)$). Since $\N, w\vDash\widetilde{B}_ap$, there exists $(v,f')\in W^\E$ such that $(v,f')\preceq_a^\E(w,e)$. Moreover, since $v\not\preceq_a w$, we must have $f'<_ae\leq_a f$. So $f'<_af$ and therefore $(v,f')\prec_a^\E(v,f)$.		\end{proof}
    \noindent Note that if 
    two models are not modally equivalent with respect to $\Lang$, then they are not modally equivalent with respect to any extension of $\Lang$. So it follows from \Cref{prop:nome} that there is no plausibility event model $\E$ such that $\M\ast\E$ is modally equivalent to $\N$ with respect to $\Lang$ enriched with, e.g. modalities for conditional beliefs, safe beliefs or dynamic modalities. Relatedly, since modal equivalence is generally assumed to be a necessary condition for an adequate notion of bisimilarity, it also follows that there is no plausibility event model $\E$ such that $\M\ast \E$ is bisimilar to $\N$ with respect to any adequate notion of bisimilarity (see e.g. \cite{demey2011remarks,andersen2013bisimulation,andersen2017bisimulation} for different notions of bisimilarity for plausibility structures).

\section{A new model for belief contraction}\label{sec:4contraction}
\Cref{prop:nome} shows that plausibility updates can't weaken an agent's belief in a way that makes her consider the two worlds equiplausible. We now introduce a framework for belief contraction that can achieve that. In general, belief contraction can be caused by many different kinds of epistemic events. In this section, we focus on a simple case of belief contraction due to \textit{hedged public announcements}, like Bob's testimony that Alice's door \textit{might} be open.\footnote{The name \textit{hedged public announcement} reflects the public and non-committal (``hedged'') character of the announcement, which opens up possibilities rather than restricting them---much like epistemic modals in so-called ``hedged assertions'' \cite{benton2020hedged}.} 
Formally, we represent the effects of such announcements by the dynamic modality ``$[\div\varphi]\psi$''. Intuitively, it says that $\psi$ is the case after the hedged public announcement that $\varphi$ \textit{might be false}. Let the \textit{contraction language} $\Lang^\div$ be the language generated by the grammar of $\Lang$ extended with the following clauses for the dynamic modality and a universal modality: $\varphi\coloneqq [\div\varphi]\varphi\mid \forall \varphi$.
We define $\exists \varphi\coloneqq \neg\forall\neg\varphi$.

A standard (non-hedged) public announcement that \textit{$\varphi$ is false} is modeled as eliminating all the $\varphi$-possibilities fro5m all agents' doxastic spaces \cite{plaza2007logics}. A natural thought, then, is that a hedged public announcement that \textit{$\varphi$ might be false} should add all the $\neg\varphi$-possibilities to all agents’ doxastic spaces.
However, this is not quite right. Adding all such possibilities to all agents' doxastic spaces may force those who already consider $\neg\varphi$ possible to update their beliefs and start considering additional $\neg\varphi$-possibilities, thereby losing more beliefs than is warranted by the announcement.
This suggests that, unlike standard public announcements, the epistemic 
effects of a hedged announcement that $\varphi$ might be false depend on the agents' initial beliefs, and in particular on whether they believe $\varphi$ prior to the announcement. For this section, we assume that this is the only relevant difference: if an agent already considers $\neg\varphi$ possible, then the announcement \textit{$\varphi$ might be false} has no effects on her beliefs; if the agent believes $\varphi$, then the announcement leads her to
consider all $\neg\varphi$-worlds as possible. 

	\begin{definition}[Contraction on $\varphi$] \label{def:contraction}Given a Kripke model $\M=(W,R,V)$ and a formula $\varphi\in\Lang^\div$. Let $\M^{\div\varphi}=(W, R^{\div\varphi}, V)$ be defined by, for all $a\in Ag$ and all $w\in W$:

			$R_a^{\div\varphi}(w)=\begin{cases}
				R_a(w)\cup \{v\in W\colon \M,v\vDash \neg \varphi\} & \text{ if } \M,w\vDash B_a\varphi\\
				R_a(w) & \text{ otherwise}
			\end{cases}$ 
		
	\end{definition}
	\noindent Call $\M^{\div\varphi}$ \textit{the model obtained from $\M$ by contracting on $\varphi$}. Note that it is always the case that $R_a(w)\subseteq R_a^{\div\varphi}(w)$.     This means that a hedged public announcement always \textit{extends} the initial Kripke model. 
    	The semantics of $\Lang^\div$ is given by the semantics of $\Lang$ over Kripke models extended with these clauses:	\begin{table}[H]
	   \centering
            \begin{tabular}{lll}
				$\M, w\vDash [\div\varphi]\psi$ & iff &$\M^{\div\varphi}, w\vDash\psi$;\\
                $\M, w\vDash \forall\varphi$ & iff & for all $v\in W$, $\M, v\vDash\varphi$.
            \end{tabular}
	\end{table}
    \vspace{-10pt}
\noindent Consider again Example~\ref{eg:mightopen}. In the present framework, the effect of Bob's announcement that \textit{Alice's office door might be open} can be modeled by contracting on the proposition \textit{Alice's office door is closed} ($\neg p$), which amounts to adding world $v$ to Alice's set of doxastic possibilities at $w$ and $v$. See Figure~\ref{fig:contraction1}.
 
 \vspace{-4pt}
 \begin{figure}[h]
\begin{minipage}{0.48\textwidth}
    \centering    \begin{tikzpicture}
         \node[label=left:$\M$] (A) {\begin{tikzpicture}[node distance=2cm, ->, >=Stealth, thick]
        \tikzstyle{state} = [circle, draw, minimum size=0.6cm, inner sep=0pt, align=center]
            \node[state, label=below:$w$] (w) {\(\neg p\)};
  \node[state, right of=w, label=below:$v$] (v) {\(p\)};
 \path[->](v) edge node[below] {$a$} (w)
(w) edge[loop left]  node[left] {$a$} (w);
        \end{tikzpicture}};
    \end{tikzpicture}
\end{minipage}
\begin{minipage}{0.48\textwidth}
    \centering
    
    \begin{tikzpicture}
	         \node[label=left:$\M^{\div \neg p}$] (A) {\begin{tikzpicture}[node distance=2cm, ->, >=Stealth, thick]
	        \tikzstyle{state} = [circle, draw, minimum size=0.6cm, inner sep=0pt, align=center]
	            \node[state, label=below:$w$] (w) {\(\neg p\)};
	  \node[state, right of=w, label=below:$v$] (v) {\(p\)};
 \path[->](v) edge[bend left] node[below] {$a$} (w)
 (w) edge[bend left, red] node[below] {$\textcolor{red}{a}$} (v)
(w) edge[loop left] node[left] {$a$} (w)
(v) edge[loop right, red] node[right] {$\textcolor{red}{a}$} (v);
	        \end{tikzpicture}};
	    \end{tikzpicture}
\end{minipage}
\vspace{-10pt}
\caption{\textbf{Left}: The Kripke model $\M$ (same as Figure~\ref{fig:DELlimitation}), representing Alice's initial belief state.
\textbf{Right}: The Kripke model $\M^{\div \neg p}$ representing Alice's belief state after updating on Bob's hedged announcement, obtained by contracting on $\neg p$. We have $\M^{\div \neg p}\vDash \neg B_a\neg p\wedge \neg B_a p$, so Alice lost a belief in $\neg p$.}\label{fig:contraction1}
	\end{figure}

One natural question is what is the logic of the contraction update as defined above. The next theorem shows that such logic is given by the system presented in Table~\ref{tab:HPAL}. 
\begin{table}[H]
			\centering
            \small
			\begin{tabular}{cc}
            \hline
												                CL & all classical propositional tautologies\\
				$\forall$K & $\forall(\varphi\ra\psi)\ra (\forall\varphi\ra\forall\psi)$\\
                $\forall$T & $\forall\varphi\ra \varphi$\\
                $\forall$5 & $\neg\forall\varphi\ra \forall\neg\forall\varphi$\\
                $\forall$B & $\forall\varphi\ra B_a\varphi$\\
				BK & $B_a(\varphi\ra\psi)\ra (B_a\varphi\ra B_a\psi)$\\
                  \midrule
				A1 & $[\div\varphi]p\leftrightarrow p$\\
				A2 & $[\div\varphi]\neg\psi\leftrightarrow \neg [\div\varphi]\psi$\\
				A3 & $[\div\varphi](\psi\wedge\chi)\leftrightarrow[\div\varphi]\psi\wedge [\div\varphi]\chi$\\
				A4& $[\div\varphi]\forall\psi\leftrightarrow \forall[\div\varphi]\psi$\\
				A5 & $[\div\varphi]B_a\psi\leftrightarrow (B_a[\div\varphi]\psi\wedge (B_a\varphi\rightarrow \forall (\neg\varphi\rightarrow[\div\varphi]\psi)))$\\
                \midrule
                MP & From $\varphi$ and $\varphi\rightarrow \psi$, infer $\psi$\\
                RE & From $\varphi\leftrightarrow\psi$, infer $\chi[\varphi/p]\leftrightarrow\chi[\psi/p]$\\
               NEC & {Necessitation rules for all modalities}\footnotemark\\
				\hline
			\end{tabular}
			\caption{\small The Theory of \textsf{Hedged Public Announcement Logic (HPAL)}}
			\label{tab:HPAL}
		\end{table}
                      
                     \footnotetext{Notice that necessitation for the belief modality is redundant, as it is derivable from $\forall$B and $\forall$-necessitation.}
                            
		\begin{theorem}\label{thm:HPALcompleteness}
			The dynamic logic of contraction updates is completely axiomatized by \textsf{HPAL}.
		\end{theorem}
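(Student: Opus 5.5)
We prove soundness and completeness separately; completeness goes through the usual reduction-axiom method for dynamic epistemic logics.

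\textbf{Soundness.} We check that each axiom is valid and that each rule preserves validity.
\begin{itemize}
\item The static part (CL, $\forall$K, $\forall$T, $\forall$5, $\forall$B, BK) is valid on arbitrary Kripke models with a universal modality. In particular $\forall$B holds because $R_a(w)\subseteq W$.
\item A1 holds because contraction leaves $W$ and $V$ unchanged.
\item A2 and A3 hold because $[\div\varphi]$ is evaluated at the same world of the functionally determined model $\M^{\div\varphi}$.
\item A4 holds because $\M^{\div\varphi}$ has the same domain as $\M$.
\item A5 is the key case. We unfold $R_a^{\div\varphi}(w)$ per Definition~\ref{def:contraction}. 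Then $\M^{\div\varphi},w\vDash B_a\psi$ iff every $v\in R_a(w)$ satisfies $\M^{\div\varphi},v\vDash\psi$ (that is, $\M,v\vDash[\div\varphi]\psi$), and in addition, if $\M,w\vDash B_a\varphi$, every $v$ with $\M,v\vDash\neg\varphi$ satisfies $\M,v\vDash[\div\varphi]\psi$.
\end{itemize}
Note that $\varphi$ and $\neg\varphi$ are evaluated in $\M$, not $\M^{\div\varphi}$, which matches the unboxed occurrences in A5.

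MP and NEC are routine: $[\div\varphi]$ is a normal deterministic modality. For RE, the point is that validity of $\varphi\leftrightarrow\psi$ implies that $\varphi$ and $\psi$ have the same extension in every model. We then show by induction on $\chi$ that substituting equivalents preserves extensions. The only non-obvious case is when the substituted formula occurs as the contraction parameter, $[\div\theta]$. Since $\M^{\div\theta}$ depends only on the extension of $\theta$ in $\M$, equal extensions yield identical updated models, and the induction goes through.

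\textbf{Completeness.} We first establish that the static fragment $\Lang^\forall$ (belief modalities plus universal modality) is complete for all Kripke models with respect to CL, $\forall$K/T/5, $\forall$B, BK, MP and necessitation. This is the standard result for $\mathbf{K}$ with a universal modality: build the canonical model, take the generated submodel under the $\forall$-relation (an equivalence relation by T and 5), and use $\forall$B to guarantee $R_a$ stays inside the submodel. Then $\forall$ becomes truly universal, and the truth lemma holds there.

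Next we define a translation $t:\Lang^\div\to\Lang^\forall$ that pushes dynamic modalities inward via A1--A5. It is applied innermost-first: we always reduce a subformula $[\div\varphi]\psi$ in which $\varphi$ and $\psi$ are already static. For static $\varphi$, we recurse on $\psi$:
\begin{itemize}
\item atoms via A1;
\item $\neg$ and $\wedge$ via A2 and A3;
\item $\forall$ via A4;
\item $B_a$ via A5.
\end{itemize}
Each right-hand side contains only occurrences $[\div\varphi]\psi'$ with $\psi'$ a proper subformula of $\psi$, together with static material ($\varphi$, $B_a\varphi$). So a complexity measure that is lexicographic on nesting depth of dynamic operators and then on the size of the innermost argument strictly decreases. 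RE, instantiated with the proved equivalences, lets us replace subformulas inside arbitrary contexts, including inside the parameter of an outer $[\div\cdot]$. This gives $\vdash\chi\leftrightarrow t(\chi)$ for every $\chi$.

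Completeness then follows: if $\vDash\chi$, soundness gives $\vDash t(\chi)$, static completeness gives $\vdash t(\chi)$, and hence $\vdash\chi$.

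\textbf{Main obstacles.} Two steps carry the real weight. The first is the termination argument for the translation. A5 duplicates $\varphi$ and introduces $\forall$ and $B_a\varphi$ outside the scope of the dynamic operator, so the measure must be chosen so that these static copies do not count against the decrease. The second is that RE must be sound for substitution inside contraction parameters; without it, nested contractions such as $[\div[\div p]q]r$ could not be reduced. If a direct measure is awkward, the fallback is the standard strategy: prove by induction on the number of dynamic operators that every formula is provably equivalent to a static one, reducing an innermost occurrence each time and using RE for the replacement.
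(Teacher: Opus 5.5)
Your proposal is correct and follows the paper's own proof. Soundness rests on A5 together with validity-preservation of RE, including the case where the substituted formula sits in the contraction parameter. Completeness comes from reducing every formula to a static one via the reduction axioms and RE, then appealing to completeness of $\mathbf{K}$ with the universal modality; you just spell out the canonical-model argument and the innermost-first termination more explicitly than the paper does.
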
 
  
           \begin{proof}[Proof sketch.] Soundness follows by straightforward validity arguments.
For completeness, given the reduction axioms, every formula with dynamic modalities is provably equivalent in      \textsf{HPAL} to a formula without.
          So the completeness of      \textsf{HPAL} follows from the completeness of the basic doxastic logic with the universal modality \citep[Theorem 7.2]{bluebook}.    \end{proof} 
     \noindent Besides axioms and rules of inference of the basic modal logic \textsf{K} with the universal modality (which is an S5 modality),      the theory contains inference rules and reduction axioms for the contraction modality that parallel the reduction axioms in Public Announcement Logic (\textsf{PAL}) \cite{plaza2007logics,van2008dynamic,wang2013axiomatizations}.      Axiom A5 is specific to our logic of contraction. It says that after contraction on $\varphi$ (viz.~a hedged announcement that $\varphi$ might be false), an agent believes $\psi$ iff she doesn't believe $\varphi$ prior to the announcement and she believes that $\psi$ holds after the announcement, or she believes $\varphi$ prior to the announcement and at any $\neg\varphi$-worlds, $\psi$ holds after the announcement. 
 \section{Preservation and Moorean formulas}
A well-known phenomenon in public announcement logic is that some formulas become false after they are announced. As a result, the  formula is not believed after the announcement. In \textsf{PAL}, formulas that become false or are not believed after an announcement are both called \textit{unsuccessful formulas}. Let $[!\varphi]$ be the dynamic modality ``after the announcement of $\varphi$'' in \textsf{PAL}.\footnote{In \textsf{PAL}, the formula $[!\varphi]\psi$ intuitively says $\psi$ is true after eliminating all $\neg\varphi$-possibilities from the model; see \cite[Chapter 4]{van2008dynamic} and references therein for more on \textsf{PAL}.} There are thus two closely related notions of \textit{success} in \textsf{PAL}: (i) what is announced is true after its announcement ($[!\varphi]\varphi$ is valid); (ii) what is announced is believed after its announcement ($[!\varphi]B_a\varphi$ is valid for all $a\in Ag$). For clarity, we will say $\varphi$ is \textit{preserved in \textsf{PAL}} if it satisfies (i); $\varphi$ is \textit{successful in \textsf{PAL}} if it satisfies (ii).
It is well-known that, in \textsf{PAL}, any formula that satisfies (i) also satisfies (ii), though the converse does not hold.\footnote{The proof that (i) implies (ii) is analogous to the ``only if''-direction of \citep[Proposition 4.32]{van2008dynamic}, though here the modality is belief rather than knowledge, and in particular it does not necessarily satisfy the T-axiom, which is the reason why the other direction does not hold in general. For example, suppose $Ag=\{a\}$, let $\varphi\coloneqq p\wedge B_a\neg p\wedge \neg B_a\bot$, and consider a Kripke model and a world $w$ in it where such $\varphi$ is true. Then, after the public announcement of $\varphi$, which eliminates all possibilities where $\varphi$ is not true and so also all worlds that were accessible for $a$ at $w$, $a$ will have inconsistent beliefs at $w$ and believe (trivially) that $\varphi$ is true. Hence, $[!\varphi]B_a\varphi$ is valid. On the other hand, $[!\varphi]\varphi$ is invalid and in particular $[!\varphi]\neg B_a\bot$ is invalid.} In this section, we focus on developing a notion of preservation for \textsf{HPAL} that is analogous to (i) for \textsf{PAL}.

A natural starting point is to say that $\varphi$ is \textit{preserved} if it is true after the hedged announcement that $\varphi$ might be true, viz.~$[\div\neg\varphi]\varphi$ is valid. But this isn't quite right: while a public announcement that $\varphi$ \textit{is} true \textit{eliminates} all $\neg\varphi$-possibilities, a hedged public announcement that $\varphi$ \textit{might} be true doesn't eliminate any possibilities, and so any possibility where $\varphi$ is false before the announcement remains in the model (though they may satisfy $\varphi$ in the new model after the announcement). This observation suggests the following alternative definition of preservation for hedged public announcements
            (let $\mathfrak{L}$ be a class of models that validates logic \textbf{L}):
\begin{definition}[Preservation]\label{def:preservation}
   $\varphi$ is \emph{preserved} (in the logic \textbf{L}) if $\vDash_\mathfrak{L}\varphi\rightarrow [\div\neg\varphi]\varphi$
\end{definition}
 
\noindent 
Not all formulas in $\Lang^\div$ are preserved in \textsf{HPAL}. Consider Example \ref{eg:mightopen} again. Suppose Bob tells Alice instead that \textit{she might have a false belief that her office door is closed}, viz.~it might be true that $p\wedge B_a\neg p$. Figure \ref{fig:moore} shows that this hedged announcement is not preserved. 
\vspace{-4pt}
 \begin{figure}[H]
\begin{minipage}{0.48\textwidth}
    \centering    \begin{tikzpicture}
         \node[label=left:$\M$] (A) {\begin{tikzpicture}[node distance=2cm, ->, >=Stealth, thick]
        \tikzstyle{state} = [circle, draw, minimum size=0.65cm, inner sep=0pt, align=center]
            \node[state, label=below:$w$] (w) {\(\neg p\)};
  \node[state, right of=w, label=below:$v$] (v) {\(p\)};
 \path[->](v) edge node[below] {$a$} (w)
(w) edge[loop left] node[left] {$a$} (w);
        \end{tikzpicture}};
    \end{tikzpicture}
\end{minipage}
\begin{minipage}{0.48\textwidth}
    \centering
    \begin{tikzpicture}
	         \node[label=left:$\M^{\div \neg (p\wedge B_a\neg p)}$] (A) {\begin{tikzpicture}[node distance=2cm, ->, >=Stealth, thick]
	        \tikzstyle{state} = [circle, draw, minimum size=0.65cm, inner sep=0pt, align=center]
	            \node[state, label=below:$w$] (w) {\(\neg p\)};
	  \node[state, right of=w, label=below:$v$] (v) {\( p\)};
 \path[->](v) edge[bend left] node[below] {$a$} (w)
 (w) edge[bend left, red] node[below] {$\textcolor{red}{a}$} (v)
(w) edge[loop left] node[left] {$a$} (w)
(v) edge[loop right, red] node[right] {$\textcolor{red}{a}$} (v);
	        \end{tikzpicture}};
	    \end{tikzpicture}
\end{minipage}
\vspace{-10pt}
\caption{Alice's belief update as a contraction on $\neg (p\wedge B_a\neg p)$. The formula $p\wedge B_a\neg p$ is true at $v$ before the update (on the left) but false at $v$ after the update (on the right). Hence, $p\wedge B_a\neg p$ is not preserved.}\label{fig:moore}
	\end{figure}
\noindent More generally, say $\varphi$ is \textit{self-refuting} (in the logic \textbf{L}) if $\vDash_\mathfrak{L} \varphi\rightarrow [\div\neg\varphi]\neg\varphi$. Clearly, $\varphi$ is not preserved if it is self-refuting. The above example is a special case of the following result:
\begin{proposition}\label{prop:existmoorean}
    The formulas $p\wedge \exists B_a\neg p$ and  $p\wedge B_a\neg p$ are self-refuting in \textsf{HPAL}.\footnote{Note that the     Moorean sentence $p\wedge B_a\neg p$ is unsatisfiable in \textbf{S5}, while     $p\wedge\exists B_a\neg p$ is satisfiable even if belief is factive.}
\end{proposition}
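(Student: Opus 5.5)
Both claims are proved by unfolding Definition~\ref{def:contraction} semantically; the reduction axioms of \textsf{HPAL} are not needed. Fix a Kripke model $\M=(W,R,V)$ and a world $w$ at which $\varphi$ holds, where $\varphi$ is either $p\wedge\exists B_a\neg p$ or $p\wedge B_a\neg p$. We contract on $\neg\varphi$ and must show that $\M^{\div\neg\varphi},w\vDash\neg\varphi$. Since valuations do not change, $p$ still holds at $w$ after the update. So it is enough to show that the belief component fails in $\M^{\div\neg\varphi}$: for the first formula, no world satisfies $B_a\neg p$; for the second, $w$ does not satisfy $B_a\neg p$.

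The key observation concerns which worlds are added. The set of $\neg\neg\varphi$-worlds, i.e.\ $\varphi$-worlds of $\M$, is nonempty, since it contains $w$. Every $\varphi$-world satisfies $p$. Therefore, whenever a world $u$ has its relation enlarged, that is whenever $\M,u\vDash B_a\neg\varphi$, the new set $R_a^{\div\neg\varphi}(u)$ contains $w$, which is a $p$-world. Hence $\M^{\div\neg\varphi},u\nvDash B_a\neg p$.

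It remains to handle the worlds $u$ with $\M,u\vDash B_a\neg p$, since only at these could $B_a\neg p$ still hold. We show that each such $u$ is enlarged. In the first case, suppose $\M,u\vDash B_a\neg p$. Every $v\in R_a(u)$ satisfies $\neg p$, so $v$ falsifies $p\wedge\exists B_a\neg p$, i.e.\ satisfies $\neg\varphi$. Thus $\M,u\vDash B_a\neg\varphi$, and by the observation above $u$ loses $B_a\neg p$. Worlds that did not satisfy $B_a\neg p$ before keep their accessible sets or enlarge them. In either case they still have an accessible $p$-world, or in the unenlarged case an unchanged $R_a(u)$ containing a $p$-world. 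So $B_a\neg p$ fails at every world of $\M^{\div\neg\varphi}$. This gives $\M^{\div\neg\varphi},w\vDash\neg\exists B_a\neg p$, hence $\neg\varphi$.

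In the second case the same reasoning applies at $w$ itself. We have $\M,w\vDash B_a\neg p$, so every $v\in R_a(w)$ satisfies $\neg p$ and therefore satisfies $\neg(p\wedge B_a\neg p)$. Thus $\M,w\vDash B_a\neg\varphi$, and $w$ is added to its own accessible set. Since $w$ satisfies $p$, we get $\M^{\div\neg\varphi},w\nvDash B_a\neg p$, so $\varphi$ fails at $w$.

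The only delicate step is confirming that the contraction is actually triggered, i.e.\ that $B_a\neg p$ implies $B_a\neg\varphi$. This holds because $\varphi$ has $p$ as a conjunct. Note also that inconsistent beliefs ($R_a(u)=\emptyset$) cause no problem: such a $u$ satisfies $B_a\neg\varphi$ vacuously, so it too is enlarged and receives $w$.
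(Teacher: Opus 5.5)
Your proof is correct. For $p\wedge B_a\neg p$ it is the paper's argument: $B_a\neg p$ entails $B_a\neg\psi$, so the contraction is triggered at $w$, and $w$ becomes its own $a$-successor and witnesses $\widetilde{B}_a p$.

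For $p\wedge\exists B_a\neg p$ your route differs slightly from the paper's. The paper uses the fact that $\exists B_a\neg p$, once true at $w$, is true at every world of $\M$. So $\varphi$ and $p$ have the same extension, which gives $\M^{\div\neg\varphi}=\M^{\div\neg p}$. It then reads off $\M^{\div\neg p}\vDash\forall\widetilde{B}_a p$, leaving implicit the check that every $\neg p$-believer acquires a $p$-successor. That check is exactly what you spell out.

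You never identify the two updates. You use only two facts: $\varphi\vDash p$, so every $\neg p$-believer also believes $\neg\varphi$ and is triggered; and $\M,w\vDash\varphi$, so every triggered world receives the $p$-world $w$. This gives a uniform lemma: whenever $\varphi$ entails $p$ and is satisfied in $\M$, the model $\M^{\div\neg\varphi}$ validates $\forall\widetilde{B}_a p$. Both claims follow from it at once. So does self-refutation of $p\wedge\chi$ for any $\chi$ entailing $\exists B_a\neg p$.

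The paper's shortcut is quicker to state, but it depends on the global character of the $\exists$-conjunct. It therefore does not carry over directly to the second formula, which the paper has to treat separately.
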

\begin{proof} Let $\varphi\coloneqq p\wedge\exists B_a\neg p$. Suppose $\M, w\vDash p\wedge\exists B_a\neg p$. Then $\M\vDash\exists B_a\neg p$ and so $\M^{\div\neg\varphi}=\M^{\div\neg p}$. Thus $\M^{\div\neg\varphi}\vDash \forall\widetilde{B}_ap$ and so $\M^{\div\neg\varphi}, w\vDash \neg\varphi$. Similarly, let $\psi\coloneqq p\wedge B_a\neg p$. Suppose $\M, w\vDash \psi$. Since $B_a\neg p\vDash B_a\neg \psi$, $\M, w\vDash B_a\neg\psi$. So $w\in R^{\div\neg\psi}_a(w)$. Since $\M, w\vDash p$, $\M^{\div\neg\psi}, w\vDash \widetilde{B}_a p$. Given that $\widetilde{B}_a p\vDash\neg\psi$, $\M^{\div\neg\psi}, w\vDash \neg\psi$.
\end{proof}
\noindent So which formulas are preserved? Recall that a formula $\varphi$ is \textit{existential} if it is built only from literals, $\wedge$, $\vee$ and $\widetilde{B}_a$; $\varphi$ is \textit{universal} if it is built only from literals, $\wedge$, $\vee$ and $B_a$. It's well-known that universal formulas remain true after being (standardly) publicly announced,
in any normal modal logic \cite{van2008dynamic}. Their dual---existential formulas---are always preserved by hedged public announcements:
\begin{theorem} \label{thm:existential}
If $\varphi\in\Lang^\div$ is equivalent in \textsf{HPAL} to an existential formula, then it is preserved in \textsf{HPAL}.
\end{theorem}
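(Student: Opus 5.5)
The argument reduces to existential formulas and then uses that existential formulas persist along model extensions that keep worlds and valuation fixed. Suppose $\varphi$ is \textsf{HPAL}-equivalent to an existential formula $\chi$. By soundness (Theorem~\ref{thm:HPALcompleteness}), $\varphi\leftrightarrow\chi$ is valid in every Kripke model. Since $\M^{\div\neg\varphi}$ is again a Kripke model, it is valid there too. Moreover, truth of $\neg\varphi$ and of $\neg\chi$ coincide in $\M$, so $\M^{\div\neg\varphi}=\M^{\div\neg\chi}$. It therefore suffices to show that, for every model $\M$ and world $w$, if $\M,w\vDash\chi$ then $\M^{\div\neg\varphi},w\vDash\chi$. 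Once this holds, $\M^{\div\neg\varphi},w\vDash\varphi$ follows by validity of the equivalence.

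The key step is a monotonicity lemma. Let $\M=(W,R,V)$ and $\M'=(W,R',V)$ have the same worlds and valuation, with $R_a(u)\subseteq R'_a(u)$ for all $a$ and $u$. Then every existential formula $\chi$ true at $(\M,u)$ is true at $(\M',u)$. I would prove this by induction on the construction of $\chi$ from literals, $\wedge$, $\vee$ and $\widetilde{B}_a$, simultaneously for all worlds $u$.
\begin{itemize}
\item Literals hold at $u$ in $\M'$ because the valuation is unchanged.
\item The $\wedge$ and $\vee$ cases follow directly from the induction hypothesis.
\item For $\widetilde{B}_a\chi'$: if $\M,u\vDash\widetilde{B}_a\chi'$, there is some $v\in R_a(u)$ with $\M,v\vDash\chi'$. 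Then $v\in R'_a(u)$, and by the induction hypothesis at $v$ we get $\M',v\vDash\chi'$. Hence $\M',u\vDash\widetilde{B}_a\chi'$.
\end{itemize}

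Definition~\ref{def:contraction} gives $R_a(u)\subseteq R^{\div\neg\varphi}_a(u)$ for every $u$, as noted right after it, and worlds and valuation are unchanged. The lemma therefore applies with $\M'=\M^{\div\neg\varphi}$, which yields $\M\vDash\chi\rightarrow[\div\neg\varphi]\chi$. Combining this with the two uses of the equivalence $\varphi\leftrightarrow\chi$ (in $\M$ and in $\M^{\div\neg\varphi}$) gives $\vDash\varphi\rightarrow[\div\neg\varphi]\varphi$, as required.

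The main point needing care is not the induction but the transfer through equivalence. The hypothesis is only that $\varphi$ is provably equivalent to $\chi$, and $\varphi$ may itself contain dynamic modalities. So I must use validity of $\varphi\leftrightarrow\chi$ in the updated model $\M^{\div\neg\varphi}$ as well as in $\M$, not just in the original model. This holds because \textsf{HPAL} places no frame conditions on the models. In particular, the argument does not require that $[\div\neg\varphi]$ and $[\div\neg\chi]$ coincide syntactically, only that they induce the same update.
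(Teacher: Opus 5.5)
Your proof is correct and follows the paper's route. Contraction keeps worlds and valuation fixed and only enlarges each $R_a(u)$, so existential formulas persist from $\M$ to $\M^{\div\neg\varphi}$, and the validity of the \textsf{HPAL}-equivalence in both models transfers this to $\varphi$. The only difference is that you prove the persistence lemma by direct induction rather than citing the preservation-under-extensions result of Andr\'eka, van Benthem and N\'emeti, which is if anything more precise, since contraction adds edges among existing worlds rather than producing a larger model in the submodel sense.
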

\begin{proof}
    Suppose $\varphi$ is equivalent in \textsf{HPAL} to an existential formula. Then $\varphi$ is preserved under model extension \cite{andreka1998modal}. Since $\M^{\div\neg\varphi}$ extends $\M$, if $\M, w\vDash\varphi$, then $\M^{\div\neg\varphi}, w\vDash\varphi$, i.e.~$\M, w\vDash \varphi\ra[\div\neg\varphi]\varphi$. 
\end{proof}
\noindent One upshot of Theorem \ref{thm:existential} is that, while the ``strong'' Moorean sentence $p\wedge B_a\neg p$ is never preserved, the ``weak'' Moorean sentence $p\wedge\neg B_ap$, which is equivalent to an existential formula, is preserved. 

So being logically equivalent to an existential formula is sufficient for being preserved. But it is not necessary. In particular, the non-existential formula $p\wedge B_ap$ (agent $a$ has a true belief that $p$) is preserved.
\begin{proposition}\label{prop:pres}
      The formula $p\wedge B_ap$ is preserved in \textsf{HPAL}. \end{proposition}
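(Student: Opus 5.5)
We prove the statement by a direct semantic check on an arbitrary Kripke model. Fix $\M=(W,R,V)$ and a world $w$ with $\M,w\vDash p\wedge B_a p$, and put $\varphi\coloneqq p\wedge B_a p$. We must show $\M^{\div\neg\varphi},w\vDash\varphi$. Contraction leaves the valuation untouched, so $\M^{\div\neg\varphi},w\vDash p$ at once (this is axiom A1). The whole task is therefore to show that $R^{\div\neg\varphi}_a(w)$ contains only $p$-worlds.

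By Definition~\ref{def:contraction}, there are two cases, depending on whether $\M,w\vDash B_a\neg\varphi$.

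\textbf{Case 1: $\M,w\nvDash B_a\neg\varphi$.} Then $R^{\div\neg\varphi}_a(w)=R_a(w)$. Every world in $R_a(w)$ satisfies $p$, because $\M,w\vDash B_ap$. Hence $\M^{\div\neg\varphi},w\vDash B_ap$.

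\textbf{Case 2: $\M,w\vDash B_a\neg\varphi$.} Then $R^{\div\neg\varphi}_a(w)=R_a(w)\cup\{v\colon \M,v\vDash\varphi\}$. The worlds in $R_a(w)$ satisfy $p$, as in Case 1. Each added world satisfies $\varphi$ in $\M$, hence satisfies $p$ in $\M$. Since the valuation is unchanged, it still satisfies $p$ in $\M^{\div\neg\varphi}$. So every world in $R^{\div\neg\varphi}_a(w)$ is a $p$-world, and $\M^{\div\neg\varphi},w\vDash B_ap$.

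The key observation is that contraction on $\neg\varphi$ only ever adds $\varphi$-worlds, and $\varphi$ entails the atomic conjunct $p$. Because $p$ is atomic, its truth value at the added worlds is the same before and after the update. There is thus no Moorean instability of the kind in Proposition~\ref{prop:existmoorean}.

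As a cross-check, the same result follows syntactically. Axiom A5 gives
\[ [\div\neg\varphi]B_ap\leftrightarrow \bigl(B_a[\div\neg\varphi]p\wedge(B_a\neg\varphi\ra\forall(\varphi\ra[\div\neg\varphi]p))\bigr). \]
By A1 the right-hand side is $B_ap\wedge(B_a\neg\varphi\ra\forall(\varphi\ra p))$, and both conjuncts hold whenever $\varphi$ holds: the first is a conjunct of $\varphi$, and $\forall(\varphi\ra p)$ is a theorem. The main care needed is with Case 2, making sure the added worlds are exactly the $\varphi$-worlds of the original model $\M$. The argument works for arbitrary Kripke models, so $p\wedge B_ap$ is preserved in \textsf{HPAL}.
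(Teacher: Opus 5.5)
Your proof is correct and is essentially the paper's argument. The paper notes in one line that $R_a^{\div\neg(p\wedge B_ap)}(w)\subseteq R_a(w)\cup\{v\in W:\M,v\vDash p\wedge B_ap\}\subseteq\{v\in W:\M,v\vDash p\}$ and that contraction leaves the truth of $p$ unchanged, which is your two cases merged into a single inclusion. Your extra syntactic cross-check via A1 and A5 is sound but not needed.
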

\begin{proof} Let $\M=(W, R, V)$ be a Kripke model with $\M,w\vDash p\wedge B_ap$. As $R_a^{\div\neg(p\wedge B_ap)}(w)\subseteq R_a(w)\cup\{v\in W: \M, v\vDash p\wedge B_ap\}\subseteq \{v\in W: \M, v\vDash p\}$ and $p$ is preserved, then $\M^{\div\neg(p\wedge B_ap)}, w\vDash p\wedge B_ap$.
\end{proof}
\noindent It would be nice to have a complete characterization of the set of preserved formulas, which we leave as an open question. 
       But there is a partial result. Let's consider \textit{knowledge} instead of \textit{belief} for a moment, and restrict our attention to single-agent models. The standard logic for knowledge is \textbf{S5}, and extending \textsf{HPAL} with \textbf{S5} axioms implies that all formulas in $\Lang$ are preserved.
\begin{theorem}
   Suppose $|Ag|=1$. Then every formula in $\Lang$ is preserved in \textsf{HPAL} extended with \textbf{S5}.
\end{theorem}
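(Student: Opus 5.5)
The plan is to show that, on a single-agent \textbf{S5} model, the contraction $\M^{\div\neg\varphi}$ leaves the part of the model relevant to evaluating $\varphi$ at $w$ untouched. Then preservation follows from the invariance of $\Lang$-formulas under generated submodels. Throughout, fix a Kripke model $\M=(W,R,V)$ with $R=R_a$ an equivalence relation (the models validating \textbf{S5}), a formula $\varphi\in\Lang$, and a world $w$ with $\M,w\vDash\varphi$. We must show $\M^{\div\neg\varphi},w\vDash\varphi$. Note that $\M^{\div\neg\varphi}$ need not itself be an \textbf{S5} model, but $\varphi$ is simply evaluated in it by the usual Kripke clauses.

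The first step is to show that no world in the equivalence class $R(w)$ gains any new successors. Let $v\in R(w)$. Since $R$ is an equivalence relation, $R(v)=R(w)$, and by reflexivity $w\in R(w)=R(v)$. Because $\M,w\vDash\varphi$, the world $v$ has an accessible $\varphi$-world, so $\M,v\nvDash B_a\neg\varphi$. By Definition~\ref{def:contraction} (contraction on $\neg\varphi$ only adds edges at worlds where $B_a\neg\varphi$ holds), we get $R^{\div\neg\varphi}(v)=R(v)=R(w)$.

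The second step is to consider the submodel generated by $w$.
\begin{itemize}
\item In $\M$, the set of worlds reachable from $w$ is exactly $R(w)$ (together with $w$ itself, which already lies in $R(w)$).
\item By the first step, the same set $R(w)$ is closed under $R^{\div\neg\varphi}$, and on it the relations $R$ and $R^{\div\neg\varphi}$ coincide.
\item The valuation is unchanged by contraction.
\end{itemize}
Hence the submodels of $\M$ and $\M^{\div\neg\varphi}$ generated by $w$ are identical. Since $\varphi\in\Lang$ contains no universal modality and no dynamic modality, its truth value at $w$ is invariant under passing to generated submodels; this is a standard fact, or a direct induction on $\varphi$. Therefore $\M^{\div\neg\varphi},w\vDash\varphi$, which is exactly $\M,w\vDash\varphi\rightarrow[\div\neg\varphi]\varphi$.

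The only delicate point is the first step. It genuinely uses both reflexivity (to get $w\in R(w)$) and the Euclidean/symmetry property (to get $R(v)=R(w)$ for every $v\in R(w)$). It also uses the restriction to a single agent, since with several agents the generated submodel would include other agents' cells, where some world might believe $\neg\varphi$ and gain new edges. Likewise, the restriction to $\Lang$ rather than $\Lang^\div$ is needed because the universal modality $\forall$ is not invariant under generated submodels.
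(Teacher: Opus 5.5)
Your proposal is correct and follows essentially the same route as the paper. Both arguments first show that every $v\in R_a(w)$ satisfies $\widetilde{B}_a\varphi$ and so gains no new edges, and then conclude by modal invariance of $\Lang$. The only difference is presentational: the paper exhibits the identity relation $Z=\{(v,v): v\in R_a(w)\}$ as a bisimulation between $(\M,w)$ and $(\M^{\div\neg\varphi},w)$, whereas you state the same fact as the submodels generated by $w$ coinciding.
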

\begin{proof}[Proof sketch.]
In \textbf{S5}, if $\M,w\vDash \varphi$ with $\varphi\in\Lang$, then every world in the equivalence class of $w$ satisfies $\widetilde{B}_a\varphi$ and so contracting on $\neg\varphi$ does not change relations inside those classes. Then, $(\M,w)$ and $(\M^{\div\neg\varphi},w)$ are bisimilar, and since $\Lang$ is bisimulation-invariant, $\varphi$ is preserved.
\end{proof}
\noindent Note that model $\M$ in Figure~\ref{fig:moore} validates \textbf{KD45}, so the preservation result  doesn't hold in these settings.\footnote{As an anonymous referee pointed out, the result does not generalize to the multi-agent setting.}

\section{AGM-style contraction principles}\label{sec:AGM}
In the AGM literature, belief contraction is modeled as a static operation on the set of formulas, representing the agent's belief base, and is characterized by a set of axioms. As \cite{van2007dynamic} points out, many of those axioms do not naturally hold in the dynamic setting where belief contraction induces model transformations. In this section, we give an overview of which of the AGM axioms hold/fail for our contraction operator. Following \cite{segerberg1999two}, we render AGM statements that express that the agent believes $\chi$ after contracting by $\varphi$, by the formula $[\div\varphi]B_a\chi$. The modal version of the AGM axioms for contraction can then be expressed by the following formulas, where $\varphi,\psi,\chi\in\Lang^{\div}$:

\begin{itemize}
\item \textbf{Closure}. $[\div\varphi](B_a(\psi\ra\chi)\ra (B_a\psi\ra B_a\chi))$. 

\item\textbf{Success}. $\exists\neg \varphi\rightarrow [\div\varphi]\neg B_a\varphi$. 

\item \textbf{Inclusion}. $[\div\varphi]B_a\psi \rightarrow B_a\psi$. 

\item \textbf{Vacuity}. $\neg B_a\varphi \rightarrow (\chi\leftrightarrow [\div\varphi]\chi)$. 

\item \textbf{Extensionality}. If $\vDash \varphi \leftrightarrow \psi$ then $\vDash [\div\varphi]\chi\leftrightarrow [\div\psi]\chi$. 
\item  \textbf{Conjunctive Inclusion}. $[\div (\varphi\wedge\psi)]\neg B_a\varphi\ra ([\div (\varphi\wedge\psi)] B_a\chi \ra [\div \varphi] B_a\chi)$. 
\item \textbf{Conjunctive Overlap}. $([\div \varphi] B_a\chi \wedge [\div \psi] B_a\chi)\rightarrow [\div (\varphi\wedge \psi)] B_a\chi $. 
\item \textbf{Consistency}. $\neg B_a\bot\ra [\div\varphi] \neg B_a\bot$. \end{itemize}

\noindent From the list above we omitted the axiom called \textbf{Recovery} (expansion by $\varphi$ after contraction by $\varphi$ restores the agent's original beliefs), as it involves an expansion operation which we do not focus on.\footnote{Notice, however, that \textbf{Recovery} does not hold in general in this framework: while the contraction operator $[\div\varphi]$ can informally be viewed as a dual of the public announcement operator $[!\varphi]$, it is not a formal dual, in the sense that contraction by $\varphi$ followed by expansion by $\varphi$ does not in general recover the original model (whether we model expansion by world- or arrow-eliminations). We explore this issue in a longer version of the paper.} We added to this list a natural property called \textbf{Consistency}, which is discussed in \cite{segerberg1999two,herzig2017dynamic}.

 Some of these properties are valid in \textsf{HPAL}. In particular, \textbf{Closure} and \textbf{Extensionality} hold unrestrictedly in our semantic framework. \textbf{Consistency} also holds, as the contraction operation never removes edges. For the other properties, the situation is more mixed, as the next two subsections show.

\subsection{Success}
Recall that $\varphi$ is successful in \textsf{PAL} if it is believed after it is announced: $[!\varphi]B_a\varphi$ is valid for all $a\in Ag$. A natural analogue of this notion in our contraction settings is that $\varphi$ is \textit{not believed} after a hedged announcement that it might be false, viz.~$[\div\varphi]\neg B_a\varphi$ for all $a\in Ag$. The only caveat concerns the case where $\varphi$ is valid in the model. In that case, since there are no $\neg\varphi$-possibilities, contracting on $\varphi$ leaves the model unchanged and so may leave agents' beliefs in $\varphi$ unchanged.\footnote{This mirrors the AGM constraint that beliefs in validities cannot be given up \cite{alchurron1985logic,sep-logic-belief-revision}.}  This illustrates why \textbf{Success} is given by the conditional $\exists\neg \varphi\rightarrow [\div\varphi]\neg B_a\varphi$. Say that a formula $\varphi$ is \textit{successful} if it satisfies this property.
 
  As with preservation, not all formulas are successful. For instance, the hedged announcement that $p\wedge B_a\neg p$ might be true in Figure~\ref{fig:moore} is unsuccessful: after updating on Bob's announcement that she might have a false belief that $\neg p$, by contracting on $\neg (p\wedge B_a\neg p)$, Alice will come to believe that she believes neither $p$ nor $\neg p$, and so she will believe that she doesn't have a false belief that $p$ (i.e. $\M^{\div\neg (p\wedge B_a\neg p)}\vDash B_a\neg (p\wedge B_a\neg p)$). 
In \textsf{PAL}, if a formula remains true after being announced,
then it is believed after the announcement \cite[Proposition 4.32]{van2008dynamic}. A similar result holds for \textsf{HPAL}. 

    \begin{proposition}\label{prop:preserved->successful}
In \textsf{HPAL}, if $\neg\varphi\in\Lang^{\div}$ is preserved, then $\varphi$ is successful.
    \end{proposition}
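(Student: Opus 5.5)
The plan is to unfold the two definitions and check \textbf{Success} pointwise, splitting on whether the agent believes $\varphi$ before the update. First, though, I will reconcile the indices. Preservation of $\neg\varphi$ (Definition~\ref{def:preservation}) gives the validity $\vDash \neg\varphi\rightarrow[\div\neg\neg\varphi]\neg\varphi$, whereas \textbf{Success} concerns $[\div\varphi]$. By Definition~\ref{def:contraction}, $\M^{\div\chi}$ depends on $\chi$ only through the sets $\{v\in W\colon \M,v\vDash\chi\}$ and $\{w\in W\colon\M,w\vDash B_a\chi\}$. Since $\neg\neg\varphi$ and $\varphi$ have the same truth set in $\M$, we get $\M^{\div\neg\neg\varphi}=\M^{\div\varphi}$; this is just the \textbf{Extensionality} property noted above. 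So the hypothesis becomes:
\begin{center}
$\M,u\vDash\neg\varphi$ implies $\M^{\div\varphi},u\vDash\neg\varphi$, for every model $\M$ and world $u$.
\end{center}

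Now fix $\M=(W,R,V)$ and $w\in W$ with $\M,w\vDash\exists\neg\varphi$, and fix an agent $a$. We must show $\M^{\div\varphi},w\vDash\neg B_a\varphi$, i.e.\ find some $u\in R^{\div\varphi}_a(w)$ with $\M^{\div\varphi},u\vDash\neg\varphi$. There are two cases.

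\textbf{Case 1:} $\M,w\vDash B_a\varphi$. By Definition~\ref{def:contraction}, $R^{\div\varphi}_a(w)$ contains every $\neg\varphi$-world of $\M$. By $\exists\neg\varphi$ there is at least one such world $v$, so $v\in R^{\div\varphi}_a(w)$. The hypothesis then gives $\M^{\div\varphi},v\vDash\neg\varphi$.

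\textbf{Case 2:} $\M,w\vDash\neg B_a\varphi$. Then there is some $u\in R_a(w)$ with $\M,u\vDash\neg\varphi$. Since $R_a(w)\subseteq R^{\div\varphi}_a(w)$ always holds, $u$ is still accessible after the update. The hypothesis again gives $\M^{\div\varphi},u\vDash\neg\varphi$.

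In both cases we obtain $\M^{\div\varphi},w\vDash\neg B_a\varphi$, which is \textbf{Success} for $\varphi$.

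The only point needing care is the identification of $[\div\neg\neg\varphi]$ with $[\div\varphi]$. It follows immediately from the truth-set dependence of Definition~\ref{def:contraction}, and the rest is direct unfolding. The argument parallels the \textsf{PAL} result \cite[Proposition 4.32]{van2008dynamic}, with the monotonicity $R_a(w)\subseteq R^{\div\varphi}_a(w)$ taking the place of world elimination.
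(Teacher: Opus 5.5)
Your proof is correct and follows the paper's argument essentially verbatim. Both split on whether $\M,w\vDash B_a\varphi$, use $\exists\neg\varphi$ in one case and $R_a(w)\subseteq R^{\div\varphi}_a(w)$ in the other, and then apply preservation of $\neg\varphi$ at the witnessing world; your explicit check that $\M^{\div\neg\neg\varphi}=\M^{\div\varphi}$, via Definition~\ref{def:contraction} depending only on the truth set of the contracted formula, fills in a step the paper uses silently.
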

  \begin{proof} Let $\M=(W,R,V)$ and suppose $\M,w\vDash \exists\neg\varphi$.
Assume that $\neg\varphi$ is preserved. We show that
$\M,w\vDash[\div\varphi]\neg B_a\varphi$. There are two cases. First, suppose that $\M, w\vDash \neg B_a\varphi$. Then, there is $v\in R_a(w)=R^{\div\varphi}_a(w)$ such that $\M, v\vDash\neg \varphi$. As $\neg\varphi$ is preserved, $\M, v\vDash [\div\varphi]\neg\varphi$ and so $\M, w\vDash [\div\varphi]\neg B_a\varphi$. Second, suppose $\M,w\vDash B_a\varphi$. Since $\M, w\vDash\exists\neg\varphi$, there is a $v\in W$ with $\M, v\vDash\neg\varphi$ and $v\in R_a^{\div\varphi}(w)$. By preservation of $\neg\varphi$, $\M, v\vDash [\div\varphi]\neg\varphi$ and so $\M, w\vDash [\div\varphi]\neg B_a\varphi$. 
  \end{proof}
\noindent Given that existential formulas are preserved (Theorem~\ref{thm:existential}), we then obtain the following corollary, which gives a sufficient condition for a formula to be successful:
\begin{corollary}\label{cor:successhpal}
   If $\varphi\in\Lang^\div$ is equivalent in \textsf{HPAL} to a universal formula, then $\varphi$ is successful in \textsf{HPAL}.
\end{corollary}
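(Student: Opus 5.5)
The plan is to reduce the corollary to Proposition~\ref{prop:preserved->successful} together with Theorem~\ref{thm:existential}. Suppose $\varphi$ is equivalent in \textsf{HPAL} to a universal formula $\chi$, that is, $\vdash_{\textsf{HPAL}}\varphi\leftrightarrow\chi$. By Proposition~\ref{prop:preserved->successful}, it is enough to show that $\neg\varphi$ is preserved. Since $\neg\varphi$ is \textsf{HPAL}-equivalent to $\neg\chi$, it suffices to show that $\neg\chi$ is equivalent to an existential formula, after which Theorem~\ref{thm:existential} applies.

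The key syntactic step is a straightforward induction on the construction of a universal formula $\chi$ built from literals, $\wedge$, $\vee$ and $B_a$. We define its dual $\chi^d$ as follows:
\begin{itemize}
\item literals are sent to their complementary literals;
\item $\wedge$ and $\vee$ are swapped;
\item $B_a$ is replaced by $\widetilde{B}_a$.
\end{itemize}
By construction, $\chi^d$ is existential. By the De Morgan laws (CL), the definition $\widetilde{B}_a\psi=\neg B_a\neg\psi$, and double negation under RE, we get $\vdash_{\textsf{HPAL}}\neg\chi\leftrightarrow\chi^d$. Hence $\neg\varphi$ is \textsf{HPAL}-equivalent to the existential formula $\chi^d$, and by Theorem~\ref{thm:existential}, $\neg\varphi$ is preserved.

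One subtlety needs checking, and it is the only real obstacle: Proposition~\ref{prop:preserved->successful} and the definition of preservation concern the specific formula $\neg\varphi$, not merely some equivalent of it. Preservation of $\neg\varphi$ means the validity of $\neg\varphi\to[\div\neg\neg\varphi]\neg\varphi$, and the contraction modality is indexed by $\neg\neg\varphi$ rather than by $\varphi$. This is harmless for two reasons:
\begin{itemize}
\item Theorem~\ref{thm:existential} is stated for any formula of $\Lang^\div$ that is equivalent to an existential one, so it applies to $\neg\varphi$ directly.
\item The contraction $\M^{\div\psi}$ depends only on the extension of $\psi$ in $\M$, so \textbf{Extensionality} (or simply the semantics in Definition~\ref{def:contraction}) lets us replace $\neg\neg\varphi$ by $\varphi$ inside the modality.
\end{itemize}
With that noted, Proposition~\ref{prop:preserved->successful} yields that $\varphi$ is successful, that is, $\exists\neg\varphi\to[\div\varphi]\neg B_a\varphi$ is valid for every $a\in Ag$.
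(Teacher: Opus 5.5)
Your proof is correct and follows essentially the same route as the paper: you show that $\neg\varphi$ is equivalent to an existential formula, apply Theorem~\ref{thm:existential} to get that $\neg\varphi$ is preserved, and conclude with Proposition~\ref{prop:preserved->successful}. The explicit dualization $\chi^d$ and the remark identifying $[\div\neg\neg\varphi]$ with $[\div\varphi]$ simply spell out steps the paper leaves implicit.
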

\begin{proof}
If $\varphi$ is equivalent to a universal formula, $\neg\varphi$ is
equivalent to an existential formula. By Theorem~\ref{thm:existential} existential formulas are preserved, and by Prop.~\ref{prop:preserved->successful} if $\neg\varphi$ is preserved, then $\varphi$ is successful.
\end{proof}

\subsection{Other properties}
The previous section showed that \textbf{Success} is not valid in \textsf{HPAL}. However, by Corollary \ref{cor:successhpal}, all universal formulas are successful and in particular all propositional formulas are successful. The same holds for \textbf{Inclusion}, \textbf{Conjunctive Inclusion} and \textbf{Conjunctive Overlap}.
          These postulates fail for reasons similar to the failure of \textbf{Success}. For instance, in Figure~\ref{fig:contraction1}, 
after Bob's hedged announcement, Alice comes to believe that she doesn't believe $p$---a belief that she didn't have before, which is a failure of \textbf{Inclusion} ($\M, w\vDash \neg B_a\neg B_a\neg p\wedge [\div\neg p]B_a\neg B_a\neg p$). 
This is to be expected: on the dynamic approach, the truths of doxastic formulas can be influenced by hedged announcements. However, all three postulates hold if the believed formula is propositional. Let $\mathfrak{K}$ be the class of all Kripke models. 
\begin{proposition} The following are valid in \textsf{HPAL}, where $\chi$ and $\lambda$ are propositional and $\varphi,\psi\in\Lang^{\div}$:
\begin{enumerate}
    \item  \textbf{Propositional Inclusion}. $[\div\varphi]B_a \chi \rightarrow B_a \chi$.
    \item  \textbf{Propositional Conjunctive Inclusion}. 
        $ [\div \lambda\wedge\psi]\neg B_a\lambda\ra ([\div \lambda\wedge\psi] B_a\chi \ra [\div \lambda] B_a\chi)$.
    \item \textbf{Propositional Conjunctive Overlap}. 
        $ ([\div \varphi] B_a\chi \wedge [\div \psi] B_a\chi)\rightarrow [\div \varphi\wedge \psi] B_a\chi$.
\end{enumerate}
\end{proposition}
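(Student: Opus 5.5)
The plan is to argue semantically, using two facts. First, the contraction $\M^{\div\varphi}$ has the same worlds and valuation as $\M$. So for propositional $\chi$ (by A1 and induction on Boolean structure), $\M^{\div\varphi},v\vDash\chi$ iff $\M,v\vDash\chi$ for every $v$. Second, by Definition~\ref{def:contraction}, $R_a(w)\subseteq R^{\div\varphi}_a(w)$. Moreover $R^{\div\varphi}_a(w)=R_a(w)$ unless $\M,w\vDash B_a\varphi$, in which case $R^{\div\varphi}_a(w)=R_a(w)\cup\{v:\M,v\vDash\neg\varphi\}$. Throughout, the set of $\neg\varphi$-worlds is computed in the original model $\M$. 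Fix $\M=(W,R,V)$ and $w\in W$.

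\textbf{Propositional Inclusion.} Assume $\M,w\vDash[\div\varphi]B_a\chi$, i.e.\ every $v\in R^{\div\varphi}_a(w)$ satisfies $\chi$ in $\M^{\div\varphi}$. By the first fact, each such $v$ also satisfies $\chi$ in $\M$. Since $R_a(w)\subseteq R^{\div\varphi}_a(w)$, we get $\M,w\vDash B_a\chi$.

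\textbf{Propositional Conjunctive Inclusion.} Assume both antecedents. We show that every $v\in R^{\div\lambda}_a(w)$ satisfies $\chi$.
\begin{itemize}
\item If $\M,w\nvDash B_a\lambda$, then $R^{\div\lambda}_a(w)=R_a(w)$. Item 1 applied to $[\div(\lambda\wedge\psi)]B_a\chi$ gives $\M,w\vDash B_a\chi$, so we are done.
\item If $\M,w\vDash B_a\lambda$, we first show $\M,w\vDash B_a(\lambda\wedge\psi)$. Otherwise $R^{\div(\lambda\wedge\psi)}_a(w)=R_a(w)$, and since $\lambda$ is propositional its truth is unchanged, so $\M^{\div(\lambda\wedge\psi)},w\vDash B_a\lambda$. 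This contradicts the first antecedent.
\item Hence $R^{\div(\lambda\wedge\psi)}_a(w)=R_a(w)\cup\{v:\M,v\vDash\neg(\lambda\wedge\psi)\}$. This set contains $R_a(w)\cup\{v:\M,v\vDash\neg\lambda\}=R^{\div\lambda}_a(w)$.
\item The second antecedent plus the first fact give $\chi$ at all these worlds. So $\M^{\div\lambda},w\vDash B_a\chi$.
\end{itemize}

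\textbf{Propositional Conjunctive Overlap.} Assume $[\div\varphi]B_a\chi$ and $[\div\psi]B_a\chi$ at $w$.
\begin{itemize}
\item If $\M,w\nvDash B_a(\varphi\wedge\psi)$, then $R^{\div(\varphi\wedge\psi)}_a(w)=R_a(w)$. Item 1 gives $B_a\chi$, so the conclusion holds.
\item Otherwise, by BK, $\M,w\vDash B_a\varphi\wedge B_a\psi$. Then $R^{\div\varphi}_a(w)\supseteq\{v:\M,v\vDash\neg\varphi\}$ and $R^{\div\psi}_a(w)\supseteq\{v:\M,v\vDash\neg\psi\}$.
\item The $\neg(\varphi\wedge\psi)$-worlds of $\M$ are exactly the union of these two sets. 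Together with $R_a(w)$, which lies in both contracted relations, this covers $R^{\div(\varphi\wedge\psi)}_a(w)$.
\item All these worlds satisfy $\chi$ by the two antecedents and the first fact.
\end{itemize}

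The only delicate step is in item 2: using the antecedent $[\div(\lambda\wedge\psi)]\neg B_a\lambda$ to force $B_a(\lambda\wedge\psi)$. This relies on $\lambda$ being propositional, so that an unchanged accessibility relation preserves belief in $\lambda$. Elsewhere, the care needed is to evaluate the added $\neg\varphi$-worlds in $\M$, not in the updated model.
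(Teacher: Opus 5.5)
Your proposal is correct and follows essentially the same route as the paper's proof. Both rest on two facts: propositional truth is invariant under contraction, and $R_a(w)\subseteq R_a^{\div\varphi}(w)$. Both also use the same case analyses, with the added $\neg\varphi$-worlds evaluated in $\M$. The only cosmetic difference is in item 2: you split on $B_a\lambda$ rather than on $B_a(\lambda\wedge\psi)$, using the antecedent $[\div(\lambda\wedge\psi)]\neg B_a\lambda$ contrapositively. You also invoke item 1 to dispatch the ``no prior belief'' cases, where the paper notes the equality of the accessibility relations directly.
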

\begin{proof}[Proof sketch.]
    \textbf{Propositional Inclusion} follows from $R_a(w)\subseteq R_a^{\div\varphi}(w)$ and from propositional truth being unaffected by contraction. \textbf{Propositional Conjunctive Inclusion} and \textbf{Overlap} follow because contracting on a conjunction introduces
no alternatives beyond those required to give up one of its conjuncts.
\end{proof}

\noindent Finally, while \textbf{Vacuity} is also among the properties that do not hold unrestrictedly, the following strong version holds:
\begin{itemize}
    \item \textbf{Strong Vacuity}.     $\bigwedge_{a\in Ag}\forall\neg B_a\varphi \rightarrow (\chi\leftrightarrow [\div\varphi]\chi)$.
\end{itemize}
This principle strengthens the standard \textbf{Vacuity} principle listed above by requiring that \textit{no agent} believes the contracted formula \textit{at any world in the model}. Such restrictions reflect the multi-agent and public nature of our contraction operation. They are not required in classic AGM theory, where beliefs     are propositional, nor in dynamic doxastic logic, which is single-agent \cite{alchurron1985logic,segerberg1999two}.
\begin{proposition}  \textbf{Strong Vacuity} is a theorem of \textsf{HPAL}.    
  \end{proposition}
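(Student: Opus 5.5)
By Theorem~\ref{thm:HPALcompleteness}, \textsf{HPAL} is sound and complete for the dynamic logic of contraction updates. So it suffices to show that \textbf{Strong Vacuity} is valid on all Kripke models; completeness then makes it a theorem. The argument is semantic and has two steps.

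First, I will show that the antecedent makes contraction trivial. Fix $\M=(W,R,V)$ and $w\in W$ with $\M,w\vDash\bigwedge_{a\in Ag}\forall\neg B_a\varphi$. By the semantics of $\forall$, for every agent $a$ and every $u\in W$ we have $\M,u\nvDash B_a\varphi$. By Definition~\ref{def:contraction}, the first case in the definition of $R^{\div\varphi}_a(u)$ therefore never applies, so $R^{\div\varphi}_a(u)=R_a(u)$ for all $a$ and $u$. Since the domain and valuation are unchanged, $\M^{\div\varphi}=\M$ as models, not merely up to bisimulation.

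Second, I will conclude that $\chi$ is unaffected. For every $\chi\in\Lang^\div$, the semantic clause gives $\M,w\vDash[\div\varphi]\chi$ iff $\M^{\div\varphi},w\vDash\chi$, and this holds iff $\M,w\vDash\chi$ because the two models are identical. This needs no induction on $\chi$, even when $\chi$ contains further dynamic or $\forall$ modalities, because the clause evaluates all of $\chi$ in the literally identical model. Hence the biconditional $\chi\leftrightarrow[\div\varphi]\chi$ holds at $w$, which gives validity.

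The only point that needs care is why the antecedent must be global, and it explains the strengthening over plain \textbf{Vacuity}. Contraction at $w$ depends on whether $B_a\varphi$ holds at \emph{other} worlds and for \emph{other} agents, and those changes can alter the truth of doxastic formulas $\chi$ at $w$. The universal modality in the antecedent is exactly what rules this out everywhere in $W$, which is what the first step uses. If a syntactic derivation were preferred, one could instead prove the schema by induction on $\chi$ using A1--A5, applying A5 together with $\forall$T and $\forall$B to discharge the conjunct $B_a\varphi\rightarrow\dots$. The semantic route via completeness is shorter, and it is the one I would use.
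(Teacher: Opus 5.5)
Your proof is correct and follows the paper's argument: the global antecedent forces $R^{\div\varphi}_a(u)=R_a(u)$ for every world $u$ and agent $a$, so $\M^{\div\varphi}=\M$, and the biconditional follows at once from the semantic clause for $[\div\varphi]$. The only difference is that you explicitly cite completeness (Theorem~\ref{thm:HPALcompleteness}) to pass from validity to theoremhood, which the paper's proof leaves implicit.
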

      \begin{proof}Let $\varphi, \chi\in\Lang^\div$ and consider a Kripke model $\M=(W,R,V)$ and a world $w$ in it.     Assume that $\M,w\vDash \forall\neg B_a\varphi$, for all $a\in Ag$, i.e., $\M,u\vDash \neg B_a\varphi$ for all $u\in W$ and $a\in Ag$. Then $R_a(u)=R^{\div\varphi}_a(u)$ for all worlds $u$     and all agents $a$, i.e., $\M=\M^{\div\varphi}$ and so $\M,w\vDash\chi$ iff $\M^{\div\varphi},w\vDash\chi$, that is, $\M,w\vDash\chi$ iff $\M,w\vDash[\div\varphi]\chi$.
 	
	    \end{proof}

\section{Generalized \textsf{DEL}}
In this section, we introduce a generalized DEL framework that can model belief contraction resulting
from different kinds of announcements, like hedged private announcements.

Let the \textit{\textsf{DEL} language} $\Lang_{DEL}$ be the language generated by the grammar of the doxastic language $\Lang$ extended with the following clauses $\varphi::=[\E,e]\varphi\mid \forall \varphi$, where $\E$ is a generalized event model (introduced below) and $e$ is an event in it. The formula $[\E,e]\varphi$ reads ``after $(\E,e)$ happens, $\varphi$ is the case''. 
\begin{definition}[Generalized event model]
    A \emph{generalized event model} is a tuple $\E=(E, Q, Q^+, pre)$ where $E$ and $Q_a$ are defined as in standard event models, $Q^+_a\subseteq E\times E$ is an accessibility relation between events defined for all agents $a\in Ag$, and $pre: E\ra\Lang_{DEL}$ is a precondition function.
\end{definition}

A generalized event model is a standard event model extended with additional accessibility relations $Q^+_a$, for each agent $a\in Ag$,  and in which preconditions can be formulas of the dynamic language $\Lang_{DEL}$. As before, we use notation $ Q_a^+(e)=\{f\in E\colon (e,f)\in Q^+_a\}$. Intuitively, both $Q_a(e)$ and $Q^+_a(e)$ represent events that are accessible, and thus considered possible, by agent $a$ at event $e$. The difference, which will become formally clear with the product update definition below, is that 
$Q^+_a(e)$ contains alternatives that are \textit{introduced by event $e$ as possible for agent $a$} 
(irrespective of $a$'s prior beliefs), while $Q_a(e)$ is the standard accessibility relation containing possibilities that are not \textit{ruled out for agent $a$ by event $e$}. 

The distinction is illustrated by the event model $\E$ in Figure~\ref{fig:ProductUpdatePublic}, which is the event model corresponding to Example
\ref{eg:mightopen}. Intuitively, $f$ represents the event \textit{Bob tells Alice her office door might be open and it is actually closed}, and $h$ represents the event \textit{Bob tells Alice her office door might be open and it is actually open}. For Alice, Bob's announcement doesn't rule out either $f$ or $h$, but it introduces $h$ as possible. Formally, this means $Q_a(f)=Q_a(h)=\{f,h\}$, while $Q_a^+(f)=Q_a^+(h)=\{h\}$.

\begin{figure}[H]
\centering
\begin{minipage}{0.33\textwidth}
    \centering    \begin{tikzpicture}
         \node[label={[label distance=-4mm]left:$\M$}] (A) {\begin{tikzpicture}[node distance=2cm, ->, >=Stealth, thick]
        \tikzstyle{state} = [circle, draw, thick, minimum size=0.6cm, inner sep=0pt, align=center]
            \node[state, label=below:$w$] (w) {\(\neg p\)};
  \node[state, right of=w, label=below:$v$] (v) {\(p\)};
   \path[->](v) edge node[above] {$a$} (w)
(w) edge[loop left] node[above,xshift=-3pt] {$a$} (w);
        \end{tikzpicture}};
    \end{tikzpicture}
\end{minipage}
\begin{minipage}{0.30\textwidth}
    \centering    \begin{tikzpicture}
         \node[label=left:$\E$] (A) {\begin{tikzpicture}[node distance=2cm, ->, >=Stealth, thick]
        \tikzstyle{event} = [rectangle, draw, minimum size=0.6cm, inner sep=0.4pt, align=center, thick]
            \node[event, label=below:$f$] (f) {\(\neg p\)};
            \node[event, label=below:$h$, right of=f] (h) {\(p\)};
\path[->] 
(h) edge[loop right] node[above] {$a+$} (h)
(f) edge node[above] {$a+$} (h);

        \end{tikzpicture}};
    \end{tikzpicture}
\end{minipage}
\begin{minipage}{0.35\textwidth}
    \centering
\begin{tikzpicture}
         \node[label=left:$\M\otimes\E$] (C) {\begin{tikzpicture}[node distance=2cm, ->, >=Stealth, thick]
        \tikzstyle{state} = [circle, draw, thick, minimum size=0.6cm, inner sep=0pt, align=center]
  \node[state, label=below:${(w,f)}$] (w2) {\(\neg p\)};
  \node[state, right of=w2, label=below:${(v,h)}$] (v2) {\( p\)};
   \path[<->]
  (v2) edge node[above] {$a$} (w2)
        (w2) edge[loop left] node[below] {$a$} (w2)
        (v2) edge[loop right] node[above] {$a$} (v2);
        \end{tikzpicture}};
    \end{tikzpicture}
\end{minipage}
\vspace{-10pt}
        \caption{\textbf{Left}: The Kripke model $\M$ representing Alice's initial belief states.                 \textbf{Middle}: The generalized event model $\E=(E,Q,Q^+,pre)$ representing Bob's announcement to Alice that $p$ might hold. An edge from an event $f$ to an event $h$ labeled by $+a$ means that $h\in Q_a^+(f)$. We omit the $Q$-edges since $Q_a(e)=E$ for all $e\in E$.          \textbf{Right}: The Kripke model $\M\otimes\E=(W^\E,R^\E,V^\E)$ representing Alice's belief states after Bob's announcement to Alice.}\label{fig:ProductUpdatePublic}
\end{figure}

Given this interpretation, the product update is then defined as the following:
\begin{definition}[Generalized product update]
Given a Kripke model $\M=(W, R, V)$ and a generalized event model $\E=(E, Q, Q^+, pre)$, their product update $\M\otimes\E$ is defined as $\M\otimes\E=(W^\E, R^\E, V^\E)$ where $W^\E$ and $V^\E$ are defined as in standard product updates (cf. Section~\ref{sec:standardDEL}) and $R^\E$ is given by:
    $$(v,f)\in R_a^\E(w,e) \text{ iff either }f\in Q_a^+(e),\text{ or } v\in R_a(w) \text{ and } f\in Q_a(e).$$
\end{definition}
\noindent This product update behaves like a standard product update in eliminating possibilities (cf. Section~\ref{sec:standardDEL}), but additionally allows to expand the set of accessible worlds by adding all worlds $(v,f)$ where $f$ is a newly considered possibility. In particular, the first disjunct allows agents to start considering worlds paired with events in $Q_a^+$, regardless of the original accessibility relation.

The semantics of the dynamic modality is standard \cite{van2008dynamic}:
\begin{table}[H]
	   \centering
            \begin{tabular}{lll}
				$\M, w\vDash [\E,e]\varphi$ & iff & if $\M,w\vDash pre(e)$ then $\M\otimes \E, (w,e)\vDash\varphi$.\\
            \end{tabular}
	\end{table}

\vspace{-10pt}
The following definition shows that generalized \textsf{DEL} can express contraction as induced by hedged public announcements.	\begin{definition}[Event model for contraction on $\varphi$] Let $\varphi\in \Lang_{DEL}$.
		An \emph{event model for contraction on $\varphi$} is a generalized event model $\E(\div\varphi)=(E,Q, Q^+, pre)$ where 
\begin{enumerate}
    \item $E=\{\varphi\wedge \bigwedge_{a\in A} \neg B_a\varphi \wedge \bigwedge_{a\in Ag\setminus A}B_a\varphi
								\colon A\subseteq Ag\}\cup \{\neg\varphi\wedge \bigwedge_{a\in A} \neg B_a\varphi \wedge \bigwedge_{a\in Ag\setminus A}B_a\varphi
								\colon A\subseteq Ag\}.$
     \item For all $a\in Ag$ and $e\in E$,
            $Q_a(e)=E$ and
        $Q^+_a(e)=\begin{cases}
				\{f\in E\colon pre(f)\vDash \neg\varphi\} & \text{ if } pre(e)\vDash B_a\varphi;\\
				\emptyset & \text{ otherwise}.
			\end{cases}$
    \item For all $e\in E$, $pre(e)=e$.
\end{enumerate}
	\end{definition}
\noindent Event model $\E(\div\varphi)$ contains events for all possible configurations of the truth value of $\varphi$ and which agents believe $\varphi$. The relation $Q_a$ is universal, capturing that the announcement reveals nothing about which event occurred. The relation $Q_a^+$ captures contraction: agents who believed $\varphi$ come to consider $\neg\varphi$-events possible, while the others are unaffected.
\begin{proposition}\label{prop:contractionisomorphism}
		For any Kripke model $\M$ and formula $\varphi\in\Lang$, $\M^{\div\varphi}$ is isomorphic to $\M\otimes\E(\div\varphi)$.
	\end{proposition}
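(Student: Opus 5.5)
The plan is to define an explicit map $g\colon W\to W^{\E(\div\varphi)}$ and check that it is a bijection preserving valuations and accessibility relations. For each $w\in W$, let $A_w=\{a\in Ag\colon \M,w\vDash\neg B_a\varphi\}$. Let $e_w$ be the unique event in $E$ whose first conjunct is $\varphi$ if $\M,w\vDash\varphi$ and $\neg\varphi$ otherwise, and whose index set is $A_w$. Set $g(w)=(w,e_w)$.

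First we check that $g$ is a bijection onto $W^{\E(\div\varphi)}$. Since $pre(e_w)=e_w$ is true at $w$ by construction, $g(w)\in W^{\E(\div\varphi)}$. The preconditions of distinct events are mutually inconsistent: they differ either in the truth value of $\varphi$ or in some $B_a\varphi$. Hence each $w$ satisfies exactly one precondition, so $(w,e)\in W^{\E(\div\varphi)}$ iff $e=e_w$. This gives surjectivity, and injectivity is immediate from the first coordinate. Valuations are preserved because $V^{\E}$ copies $V$ on the first coordinate.

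Next, the relations. Fix $a$ and $w$, and compute $R^\E_a(w,e_w)$. Since $Q_a(e_w)=E$, the second disjunct of the product update contributes exactly $\{(v,e_v)\colon v\in R_a(w)\}$. For the first disjunct we use two equivalences. The first is $pre(e_w)\vDash B_a\varphi$ iff $a\notin A_w$ iff $\M,w\vDash B_a\varphi$. Here the left-to-right direction from $a\notin A_w$ is syntactic: $B_a\varphi$ is a conjunct. Conversely, if $a\in A_w$, then $pre(e_w)$ contains $\neg B_a\varphi$ and is satisfiable (at $w$), so it does not entail $B_a\varphi$. The second is $pre(f)\vDash\neg\varphi$ iff the first conjunct of $f$ is $\neg\varphi$, iff $f=e_v$ for some $v$ with $\M,v\vDash\neg\varphi$. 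The right-to-left direction is trivial. For the left-to-right direction, we need that each precondition whose first conjunct is $\varphi$ does not entail $\neg\varphi$, i.e.\ is satisfiable.

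Combining these, if $\M,w\vDash B_a\varphi$, then the first disjunct contributes exactly $\{(v,e_v)\colon \M,v\vDash\neg\varphi\}$; otherwise it contributes nothing. Thus $R^\E_a(g(w))=g[R^{\div\varphi}_a(w)]$, matching Definition~\ref{def:contraction}.

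The main obstacle is the entailment reading of $\vDash$ in the definition of $Q^+_a$. I would interpret $pre(e)\vDash\psi$ as validity over all Kripke models. This requires every event precondition that might be relevant to be satisfiable, which holds for events realized in $\M$ since their preconditions are true at some world. Events not realized in $\M$ never appear in $W^\E$, so their $Q^+$ behaviour is irrelevant. For $f=e_v$ with $\M,v\vDash\varphi$, $pre(f)$ holds at $v$ and so does not entail $\neg\varphi$. It is therefore enough to argue about realized events, which sidesteps any worry about unsatisfiable combinations such as $\varphi=\bot$.
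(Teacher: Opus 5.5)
Your proposal is correct and takes the same route as the paper. The paper uses the inverse map $h((w,e))=w$, gets bijectivity from the fact that each world satisfies exactly one precondition, and checks the relations by splitting on whether $\M,w\vDash B_a\varphi$. Your explicit handling of the entailment conditions in $Q^+_a$, using that the preconditions of realized events are satisfiable, makes precise a step the paper passes over silently.
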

    \begin{proof}[Proof sketch.] Both updates preserve all worlds of $\M$, so there is an isomorphism between the updated models: as $R_a(w)\subseteq R_a^{\div\varphi}(w)$ and $Q_a(e)=E$ for all $a\in Ag,e\in E$, all relations in $\M$ are preserved by both updates. Also, the only relations added in both cases are from $B_a\varphi$-worlds to $\neg\varphi$-worlds.
    \end{proof}

Next we consider an example involving hedged private announcement:
\begin{example}\label{eg:private}
Alice just told Clark that she locked her office door before she left. Later, while Clark is evidently distracted, Bob privately tells Alice that her office door might actually be open. Alice comes to suspend judgment on whether she locked the door, while Clark does not notice this exchange, continuing to believe that Alice's office door is closed and that Alice believes that. 
\end{example}

We model Bob's private announcement using the generalized event model $\E$ as described below:
\begin{figure}[H]
\centering
\begin{minipage}{0.33\textwidth}
    \centering    \begin{tikzpicture}
         \node[label={[label distance=-4mm]left:$\M$}] (A) {\begin{tikzpicture}[node distance=2cm, ->, >=Stealth, thick]
        \tikzstyle{state} = [circle, draw, thick, minimum size=0.6cm, inner sep=0pt, align=center]
            \node[state, label=below:$w$] (w) {\(\neg p\)};
  \node[state, right of=w, label=below:$v$] (v) {\(p\)};
   \path[->](v) edge node[above] {$a,c$} (w)
(w) edge[loop left] node[above,xshift=-3pt] {$a,c$} (w);
        \end{tikzpicture}};
    \end{tikzpicture}
\end{minipage}
\begin{minipage}{0.30\textwidth}
    \centering    \begin{tikzpicture}
         \node[label=left:$\E$] (A) {\begin{tikzpicture}[node distance=2cm, ->, >=Stealth, thick]
        \tikzstyle{event} = [rectangle, draw, minimum size=0.6cm, inner sep=0.4pt, align=center, thick]
            \node[event, label=below:$e$] (e) {\(\top\)};
            \node[event, label=above:$f$, left of=e,yshift=23pt,xshift=7pt] (f) {\(\neg p\)};
            \node[event, label=below:$h$, below of=f,yshift=20pt] (h) {\(p\)};
\path[->] (f) edge node[above] {$c$} (e)
(h) edge node[below] {$c$} (e)
(e) edge[loop right] node[above] {$a,c$} (e)
(f) edge[loop left] node[above,xshift=-3pt] {$a$} (f)
(h) edge[loop left] node[above, ,xshift=-3pt] {$a$} (h);

\path[->](f) edge[bend right] node[left] {$a+$} (h);
\path[<->](h) edge[bend right] node[right] {$a$} (f);
\path[->] (h) edge[loop right] node[below, xshift=3pt] {$a+$} (h);

        \end{tikzpicture}};
    \end{tikzpicture}
\end{minipage}
\begin{minipage}{0.35\textwidth}
    \centering
\begin{tikzpicture}
         \node[label=left:$\M\otimes\E$] (C) {\begin{tikzpicture}[node distance=2cm, ->, >=Stealth, thick]
        \tikzstyle{state} = [circle, draw, thick, minimum size=0.6cm, inner sep=0pt, align=center]
  \node[state, label=above:${(w,f)}$] (w2) {\(\neg p\)};
  \node[state, below of=w, label=below:${(v,h)}$, yshift=20pt] (v2) {\( p\)};
\node[state, right of=w, label=above:${(w,e)}$] (w) {\(\neg p\)};
  \node[state, below of=w, label=below:${(v,e)}$, yshift=20pt] (v) {\( p\)};
   \path[->]
  (v) edge node[right] {$a,c$} (w)
(w) edge[loop right] node[above] {$a,c$} (w);
   \path[<->]
  (v2) edge node[right] {$a$} (w2);
   \path[->]
  (v2) edge node[above] {$c$} (w)
      (w2) edge node[above] {$c$} (w);
           \path[->]
        (w2) edge[loop left] node[below] {$a$} (w2)
        (v2) edge[loop left] node[above] {$a$} (v2);
        \end{tikzpicture}};
    \end{tikzpicture}
\end{minipage}
\vspace{-10pt}
        \caption{\textbf{Left}: The Kripke model $\M$ representing Alice and Clark's initial belief states. We have $\M\vDash B_a \neg p\wedge B_c \neg p \wedge B_c B_a\neg p$. 
                \textbf{Middle}: The generalized event model $\E=(E,Q,Q^+,pre)$ representing Bob's private announcement to Alice that $p$ might hold. An edge from an event $f$ to an event $h$ labeled by $+a$ means that $h\in Q_a^+(f)$. If an edge's label does not contain $+$, then the edge is a $Q$-relation, e.g. $e\in Q_c(f)$.
        \textbf{Right}: The Kripke model $\M\otimes\E=(W^\E,R^\E,V^\E)$ representing Alice and Clark's belief states after Bob's private announcement to Alice.}
            \label{fig:private}
\end{figure}

    \noindent We now look into what is the logic of generalized updates. First, we define the composition of two generalized event models and then show that it is equivalent to the sequential updates of the two. 
 \begin{definition}[Composition]
	Given two generalized event models $\E=(E^\E, Q^\E, Q^{+\E}, pre^\E)$ and $\F=(E^\F,  Q^\F, Q^{+\F}, pre^\F)$, then their composition is $\E\circ \F=(E,Q, Q^{+}, pre)$ where
	\begin{itemize}
    \vspace{-5pt}
      \setlength\itemsep{0em}
		\item $E=E^\E\times E^\F$;
		\item $(e',f')\in Q_a((e,f))$ iff $e'\in Q_a^\E(e)$ and $f'\in Q_a^\F(f)$;
		\item $(e', f')\in Q^{+}_a((e,f))$ iff $f'\in Q^{+\F}_a(f)$, or $f'\in Q_a^\F(f)$ and $e'\in Q^{+\E}_a(e)$;
	\item $pre((e,f))=pre^\E(e)\wedge [\E, e]pre^\F(f)$.
	\end{itemize}
\end{definition}

    The following shows that this is given by the system presented in Table~\ref{tab:GDEL}.\footnote{Note that, unlike the axiomatization of \textsf{HPAL}, the axiomatization of \textsf{GDEL} includes a composition axiom (A6) but without RE as a valid rule of inference. We conjecture that we can omit A6 and include RE as a valid inference rule, though we leave the verification of this conjecture to future work.}
	
		\begin{table}
			\centering
            \small
			\begin{tabular}{cc}
               				                                                				
												                CL & all classical propositional tautologies\\
				$\forall$K & $\forall(\varphi\ra\psi)\ra (\forall\varphi\ra\forall\psi)$\\
                $\forall$T & $\forall\varphi\ra \varphi$\\
                $\forall$5 & $\neg\forall\varphi\ra \forall\neg\forall\varphi$\\
                $\forall$B & $\forall\varphi\ra B_a\varphi$\\
				BK & $B_a(\varphi\ra\psi)\ra (B_a\varphi\ra B_a\psi)$\\
                  \midrule
				A1 & $[\E, e]p\leftrightarrow (pre(e)\ra p)$\\
				A2 & $[\E,e]\neg\varphi\leftrightarrow(pre(e)\ra \neg [\E, e]\varphi)$\\
				A3 & $[\E, e](\varphi\wedge\psi)\leftrightarrow ([\E,e]\varphi\wedge [\E,e]\psi)$\\
				A4& $[\E, e]\forall\varphi\leftrightarrow (pre(e)\ra\bigwedge_{f\in E}\forall[\E,f]\varphi)$\\
				A5 & $[\E,e]B_a\varphi\leftrightarrow (pre(e)\ra (\bigwedge_{f\in Q_a^+(e)}\forall [\E,f]\varphi\wedge  \bigwedge_{f\in Q_a(e)}B_a[\E, f]\varphi))$\\
				A6 & $[\E, e][\F, f]\varphi\leftrightarrow [\E\circ\F, (e,f)]\varphi$\\
                				\multicolumn{2}{c}{MP and necessitation rules for all modalities}  \\
				\hline
			\end{tabular}
			\caption{\small The Theory \textsf{GDEL}}
			\label{tab:GDEL}
		\end{table}
     
		\begin{theorem}
			The dynamic logic of generalized updates is completely axiomatized by \textsf{GDEL}.
		\end{theorem}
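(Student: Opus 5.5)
We prove soundness and completeness separately; completeness goes by the standard reduction strategy familiar from \textsf{DEL}.

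\textbf{Soundness.} It suffices to check that each axiom is valid and that the rules preserve validity. CL, $\forall$K, $\forall$T, $\forall$5, $\forall$B, BK, MP and necessitation are standard on Kripke models with a universal modality. A1--A3 follow directly from the semantics of $[\E,e]$: $V^\E$ copies $V$, and the clause ``if $pre(e)$ then \dots'' accounts for the conditional forms. A4 holds because $W^\E$ consists exactly of the pairs $(v,f)$ with $\M,v\vDash pre(f)$, so quantifying over $W^\E$ amounts to quantifying over $v\in W$ and $f\in E$ with $[\E,f]$.

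A5 is the key clause. By the definition of generalized product update, $R^\E_a(w,e)$ is the union of two sets:
\begin{enumerate}
\item $\{(v,f)\in W^\E: f\in Q^+_a(e)\}$, which yields the conjunct $\forall[\E,f]\varphi$ for each $f\in Q_a^+(e)$;
\item $\{(v,f)\in W^\E: v\in R_a(w),\ f\in Q_a(e)\}$, which yields $B_a[\E,f]\varphi$ for each $f\in Q_a(e)$.
\end{enumerate}
Since $E$ is finite, both conjunctions are finite.

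A6 requires showing that $(\M\otimes\E)\otimes\F$ is isomorphic to $\M\otimes(\E\circ\F)$ via $((w,e),f)\mapsto(w,(e,f))$. Preconditions match by the semantics of $[\E,e]$. For relations, $((v,e'),f')\in R^{\E\F}_a((w,e),f)$ iff either $f'\in Q^{+\F}_a(f)$, or $f'\in Q^\F_a(f)$ and $(v,e')\in R^\E_a(w,e)$. Unfolding the latter condition gives: $f'\in Q^{+\F}_a(f)$, or $f'\in Q^\F_a(f)$ and $e'\in Q^{+\E}_a(e)$, or $f'\in Q_a^\F(f)$, $e'\in Q_a^\E(e)$ and $v\in R_a(w)$. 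This is exactly the product-update clause for $\E\circ\F$ with its $Q^+$ and $Q$ as defined. The step needs care, but it is routine bookkeeping.

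\textbf{Completeness.} We define a translation $t$ from $\Lang_{DEL}$ to $\Lang$ extended with $\forall$, pushing dynamic modalities inward by A1--A5 and collapsing stacked modalities by A6. We then show $\vdash\varphi\leftrightarrow t(\varphi)$. Given this, any valid $\varphi$ has a valid $t(\varphi)$, which is provable by completeness of the static logic \textbf{K} with universal modality \citep[Theorem 7.2]{bluebook}. Hence $\varphi$ is provable.

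The main obstacle is termination of the translation, together with the absence of RE. Preconditions are themselves dynamic formulas, and A5 and A4 introduce $[\E,f]$ inside $B_a$ and $\forall$. We therefore need a complexity measure, in the style of van Ditmarsch--van der Hoek--Kooi, that assigns $[\E,e]\varphi$ a weight larger than every right-hand side. A natural choice is $c([\E,e]\varphi)=(c(\E)+4)\cdot c(\varphi)$, where $c(\E)=\max_f c(pre(f))$. For this measure we check that every right-hand side of A1--A6 has strictly smaller complexity than its left-hand side; the composition case requires $c(\E\circ\F)$ to be bounded appropriately, which is why the exact weight is chosen this way.

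Without RE, we cannot simply rewrite subformulas in place. Instead we prove by induction on $c$ that every formula is provably equivalent to a static one, handling the innermost case $[\E,e]\psi$ with $\psi$ already static by induction on $\psi$:
\begin{enumerate}
\item the atom, negation, conjunction, $\forall$ and $B_a$ cases use A1--A5 together with necessitation and K-distribution for $[\E,e]$, $B_a$ and $\forall$, which provide the needed congruence;
\item the nested case $[\E,e][\F,f]\psi$ uses A6.
\end{enumerate}
The derived replacement-of-equivalents for each modality (from necessitation and K) substitutes for RE.
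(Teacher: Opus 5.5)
Your proposal follows essentially the same route as the paper. Soundness goes by checking the reduction axioms, with A6 via the same isomorphism $((w,e),f)\mapsto(w,(e,f))$, and completeness goes by reducing every formula to a static one (A6 standing in for RE) and invoking completeness of $\mathbf{K}$ with the universal modality.

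Two small repairs are needed. First, the conjunctions in A4 and A5 range over up to $2|E|$ events. With binary $\wedge$ and the usual clause $c(\varphi\wedge\psi)=1+\max(c(\varphi),c(\psi))$, the weight $c(\E)=\max_f c(pre(f))$ therefore does not give the strict decrease you claim; take many events with precondition $\top$. Let $c(\E)$ also grow with $|E|$; adding $2|E|$ works and keeps A6 decreasing. Second, K-distribution for $[\E,e]$ is not a \textsf{GDEL} axiom, but the outside-in translation you describe first never needs it.
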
 
        \begin{proof}
[Proof sketch.] The proof strategy is analogous to that used for Theorem~\ref{thm:HPALcompleteness}, using the composition axiom A6 instead of the inference rule RE.
        \end{proof}
        
        \noindent The theory \textsf{GDEL} contains the static axioms of \textsf{HPAL}, together with the inference rules for all modalities. Additionally, it contains reduction axioms for event models analogous to those of standard \textsf{DEL}.  Axiom A5 captures the effect of generalized updates on belief via the relations $Q_a$ and $Q_a^+$.

 The next theorem shows that generalized updates always produce a model that is a simulation of a refinement of the initial model (refinement and simulation are defined standardly \cite{bluebook}, but see Appendix). As noted in \cite{bozzelli2014refinement,ditmarsch2025simulation}, refinements decrease uncertainty by eliminating possibilities, while simulations capture increases in uncertainty by adding possibilities. Generalized updates thus can be seen as updates where agents may rule out possibilities while starting to consider new ones, as in belief revision.

	\begin{theorem}
		For any Kripke model $\M$ and generalized event model $\E$, $\M\otimes\E$ is a simulation of a refinement of $\M$.
	\end{theorem}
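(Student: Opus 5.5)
The plan is to factor the generalized product update into two steps: a standard product update, which only eliminates possibilities, followed by adding edges. I will use the standard notions, fixing directions as follows. A model $\M'=(W',R',V')$ is a \emph{refinement} of $\M=(W,R,V)$ if there is a non-empty relation $Z\subseteq W\times W'$ satisfying two conditions:
\begin{enumerate}
\item \textbf{atoms}: if $wZw'$ then $w$ and $w'$ agree on all $p\in At$;
\item \textbf{back}: if $wZw'$ and $v'\in R'_a(w')$, then there is $v\in R_a(w)$ with $vZv'$.
\end{enumerate}
A model $\N''$ is a \emph{simulation} of $\N'$ if there is a non-empty relation $Z'\subseteq W'\times W''$ satisfying atoms and \textbf{forth}: if $w'Z'w''$ and $v'\in R'_a(w')$, then there is $v''\in R''_a(w'')$ with $v'Z'v''$.

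Given $\E=(E,Q,Q^+,pre)$, let $\E^-=(E,Q,pre)$ be its standard reduct, obtained by dropping $Q^+$. Let $\N=\M\otimes\E^-$ be the standard product update of Section~\ref{sec:standardDEL}, taken with the same dynamic semantics for preconditions. The first step is to show that $\N$ is a refinement of $\M$ via
$$Z=\{(w,(w,e))\colon (w,e)\in W^{\E}\}.$$
Atoms hold because $V^{\E}$ copies $V$. For back, suppose $(v,f)\in R^{\E^-}_a(w,e)$. By definition $v\in R_a(w)$ and $(v,f)\in W^\E$, so $v\in R_a(w)$ and $vZ(v,f)$.

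The second step is to show that $\M\otimes\E$ is a simulation of $\N$ via the identity relation on $W^\E$. Both models have the same domain $W^\E$, since the domain depends only on $\M$ and $pre$, and they have the same valuation. The generalized update clause gives
$$R_a^{\E}(w,e)\supseteq\{(v,f)\in W^\E\colon v\in R_a(w),\ f\in Q_a(e)\}=R_a^{\E^-}(w,e),$$
so every $\N$-edge is an $\M\otimes\E$-edge. Forth therefore holds trivially, witnessed by the same target. Composing the two steps yields the theorem.

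The argument is essentially routine. The main point of care is fixing the directions of ``refinement of'' and ``simulation of'' so that they match the paper's intended reading: refinement removes possibilities (back from the new model), and simulation adds them (forth into the new model). I would check these conventions against the Appendix definitions and, if necessary, transpose $Z$. A secondary detail is that when no world satisfies any precondition, $W^\E$ is empty. Then neither $\N$ nor $\M\otimes\E$ is a Kripke model, and $Z$ would be empty. The statement should be read under the usual proviso that the update is defined, i.e.\ some $pre(e)$ is satisfied in $\M$, which makes $Z$ non-empty.
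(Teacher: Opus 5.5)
Your proposal is correct and takes essentially the same route as the paper. You pass through the intermediate model $\M\otimes\E^-$ that keeps only the $Q$-edges, which is a refinement of $\M$, and you show $\M\otimes\E$ is a simulation of it via the identity relation; the only differences are that you verify the refinement step directly with $Z=\{(w,(w,e))\}$ instead of citing the standard \textsf{DEL} result (arguably cleaner, since preconditions here may be dynamic formulas), and that you flag the empty-domain case, which the paper leaves implicit.
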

	\begin{proof}[Proof sketch.]The idea is to split a generalized update into two steps. First ignore the $Q_a^+$ edges and obtain a standard \textsf{DEL} update, which gives a refinement of the initial model \cite{bozzelli2014refinement}. The full generalized update then only adds edges to this model, so it is a simulation of that refinement.
	\end{proof}

Before concluding this section, we discuss one limitation of the contraction operation as defined in Section \ref{sec:4contraction}, and how the generalized product update helps to overcome it.
\begin{example}
    Alice and Clark walk down the hallway. Alice believes that she closed her office door ($\neg p$) and that there will be a department meeting in the afternoon ($q$). Clark, however, is uncertain about both questions, though he believes that Alice believes the truth no matter what it is, and so does Alice. They then encounter Bob, who tells them that Alice's office door might be open. 
\end{example}
Intuitively, given Bob's announcement, Clark will come to think that, if Alice initially believes that her office door is closed, then she'll come to suspend judgment about whether her office door is open, but she will remain confident about whether there will be a department meeting in the afternoon. So Alice's and Clark's beliefs should be modeled by $\N$ in \Cref{fig:hansexample}. \begin{figure}[H]
\centering
\begin{minipage}{0.4\textwidth}
    \centering    \begin{tikzpicture}
         \node[label={[label distance=-4mm]left:$\M$}] (A) {\begin{tikzpicture}[node distance=2cm, ->, >=Stealth, thick]
        \tikzstyle{state} = [circle, draw, thick, minimum size=0.95cm, inner sep=0pt, align=center]
            \node[state, label=above:$u$] (w) {\(pq\)};
  \node[state, right of=w, label=above:$z$] (v) {\(p\neg q\)};
  \node[state, below of=w, label=below:$w$] (u) {\(\neg pq\)};
  \node[state, right of=u, label=below:$v$] (z) {\(\neg p\neg q\)};
  \path[->] (w) edge[loop left] node[left] {$a,c$} (w)
(v) edge[loop right] node[right] {$a,c$} (v)
(u) edge[loop left] node[left] {$a,c$}(u)
(z) edge[loop right] node[right] {$a,c$} (z);
\path[<->] (w) edge node[above] {$c$} (v)
  (w) edge node[left] {$c$} (u)
  (v) edge node[right] {$c$} (z)
   (u) edge node[above] {$c$} (z);
        \end{tikzpicture}};
    \end{tikzpicture}
\end{minipage}
\vspace{-10pt}
\begin{minipage}{0.4\textwidth}
    \centering
\begin{tikzpicture}
         \node[label={[label distance=-4mm]left:$\N$}] (A) {\begin{tikzpicture}[node distance=2cm, ->, >=Stealth, thick]
        \tikzstyle{state} = [circle, draw, thick, minimum size=0.95cm, inner sep=0pt, align=center]
            \node[state, label=above:$u$] (w) {\(pq\)};
  \node[state, right of=w, label=above:$z$] (v) {\(p\neg q\)};
  \node[state, below of=w, label=below:$w$] (u) {\(\neg pq\)};
  \node[state, right of=u, label=below:$v$] (z) {\(\neg p \neg q\)};
  \path[->] (w) edge (v)
  (u) edge (z)
  (w) edge[loop left] node[left] {$a,c$} (w)
(v) edge[loop right] node[right] {$a,c$} (v)
(u) edge[loop left] node[left] {$a,c$}  (u)
(z) edge[loop right] node[right] {$a,c$}  (z)
   (u) edge[bend left, red] node[left]{$a$} (w)
(z) edge[bend right, red] node[right]{$a$} (v);
\path[<->] (w) edge node[above] {$c$} (v)
  (w) edge node[left] {$c$} (u)
  (v) edge node[right] {$c$} (z)
   (u) edge node[above] {$c$} (z);
        \end{tikzpicture}};
    \end{tikzpicture}
\end{minipage}
        \caption{\textbf{Left}: The Kripke model $\M$ representing Alice and Clark's initial belief states.         We have $\M\vDash B_c(B_ap\vee B_a\neg p) \wedge B_c(B_aq\vee B_a\neg q)$. \textbf{Right}: The Kripke model $\N$ representing Alice and Clark's belief states after Bob's hedged announcement, where $\N,w\vDash \neg B_c(B_a p\vee B_a\neg p)\wedge B_c(B_aq\vee B_a\neg q)$.}
    \label{fig:hansexample}
\end{figure}
\begin{proposition}
    There is no formula $\varphi\in\Lang^{\div}$ such that $\M^{\div\varphi}\equiv \N$.
\end{proposition}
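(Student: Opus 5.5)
The plan is to exploit how rigid contraction is in a single respect: the worlds added to an agent's doxastic set are always the same set, namely the $\neg\varphi$-worlds. Fix any $\varphi\in\Lang^{\div}$. Let $S=\{x\in W\colon \M,x\vDash\varphi\}$ and $T=W\setminus S$. In $\M$ Alice's relation consists only of self-loops, so $R_a(x)=\{x\}$. Hence $\M,x\vDash B_a\varphi$ iff $x\in S$, and by Definition~\ref{def:contraction}
$$R^{\div\varphi}_a(x)=\{x\}\cup T \text{ for } x\in S,\qquad R^{\div\varphi}_a(x)=\{x\} \text{ for } x\in T.$$
This holds whatever $\varphi$ is, including dynamic or universal subformulas, since only the extension $S$ matters. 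Clark's relation is irrelevant: the argument uses only $a$-formulas.

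Next I will isolate three $\Lang$-facts about $\N$. Let $\psi\coloneqq\neg B_ap\wedge\neg B_a\neg p$. Reading off Figure~\ref{fig:hansexample}, Alice's accessible set at $w$ is $\{w,u\}$ and at $v$ is $\{v,z\}$. Therefore:
\begin{enumerate}
\item $\N,w\vDash q\wedge\psi$;
\item $\N,v\vDash\neg q\wedge\psi$;
\item $\N,x\vDash B_aq\vee B_a\neg q$ for every world $x$, because each of Alice's accessible sets is $q$-homogeneous.
\end{enumerate}
By modal equivalence, $\M^{\div\varphi}$ must then contain a world $x_1$ satisfying $q\wedge\psi$ and a world $x_2$ satisfying $\neg q\wedge\psi$. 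Moreover, every world of $\M^{\div\varphi}$ must satisfy $B_aq\vee B_a\neg q$: otherwise that world would have no modally equivalent partner in $\N$.

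Now I derive the contradiction. A world with only a self-loop decides $p$, so $x_1,x_2\in S$, and $T\neq\emptyset$ (in fact $T$ must contain a world disagreeing with $x_1$ on $p$). At $x_1$, Alice's set is $\{x_1\}\cup T$. Since $x_1\vDash q$ and Alice must be decided on $q$, every world in $T$ satisfies $q$. By the same argument at $x_2$, every world in $T$ satisfies $\neg q$. As $T\neq\emptyset$, this is impossible. The degenerate cases $S=W$ or $S=\emptyset$ are already excluded, since then $\M^{\div\varphi}=\M$ and $\M\vDash B_ap\vee B_a\neg p$.

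The main obstacle is the bookkeeping step: reading $\N$ correctly from the figure, and making sure the notion of modal equivalence in the footnote to Theorem~\ref{prop:nome} delivers the transfer needed in both directions. Concretely, each world of $\N$ must have a partner in $\M^{\div\varphi}$ agreeing on all of $\Lang$, and conversely. I will state this explicitly, so that the existence of $x_1,x_2$ and the validity of $B_aq\vee B_a\neg q$ in $\M^{\div\varphi}$ are justified.
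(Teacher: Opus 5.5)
Your proof is correct, and its core idea matches the paper's. Contraction adds the same set $\llbracket\neg\varphi\rrbracket$ at every world where $a$ believes $\varphi$. So the $p$-worlds that $w$ and $v$ must both acquire cannot be $q$-homogeneous for both.

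The packaging differs. The paper first upgrades to pointwise equivalence $(\M^{\div\varphi},x)\equiv(\N,x)$, using that the four worlds have distinct valuations. It then chases concrete worlds:
\begin{itemize}
\item $u\in R^{\div\varphi}_a(w)$, tacitly via $\widetilde{B}_a(p\wedge q)$, forces $\M,w\vDash B_a\varphi$ and $\M,u\vDash\neg\varphi$;
\item likewise $\M,z\vDash\neg\varphi$;
\item so $z\in R^{\div\varphi}_a(w)$, and $B_aq$ fails at $w$.
\end{itemize}
You instead compute $R^{\div\varphi}_a$ uniformly from $S$ and $T$. You then transfer only three $\Lang$-facts: satisfiability of $q\wedge\psi$, satisfiability of $\neg q\wedge\psi$, and validity of $B_aq\vee B_a\neg q$.

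Your route never identifies worlds across the two models or transfers accessibility facts. It therefore works whether modal equivalence is read pointwise or merely as agreement on satisfiable and valid formulas. It also excludes every target model with those three properties, not just $\N$. The paper's version is shorter and pinpoints the offending added edge from $w$ to $z$.
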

\begin{proof}
     Suppose towards a contradiction that $\M^{\div\varphi}\equiv\N$ for some $\varphi\in\Lang^{\div}$. Since every world in $\M$ satisfies a distinct set of propositional formulas and contraction preserves propositional valuation, it follows that for all $x\in W$, $(\M^{\div\varphi}, x)\equiv (\N, x)$. Since $u\not\in R_a(w)$ but $u\in R^{\div\varphi}_a(w)$, we have $\M, w\vDash B_a\varphi$ and $\M, u\vDash \neg\varphi$. Similarly, $\M, v\vDash B_a\varphi$ and $\M,z\vDash \neg\varphi$. By the definition of contraction, $\{x\in W: \M, x\vDash\neg\varphi\}\subseteq R_a^{\div\varphi}(w)$. So $z\in R_a^{\div\varphi}(w)$. So $\M^{\div\varphi}, w\vDash \neg B_aq$. Since $\N, w\vDash B_aq$, this contradicts the assumption that $(\M^{\div\varphi}, w)$ and $(\N, w)$ are modally equivalent. \end{proof}
One way of addressing this problem within the \textsf{HPAL} framework is to restrict the set of worlds that Alice starts considering to a subset of $p$-possibilities, e.g.~those that are compatible with her background knowledge or ``entrenched'' beliefs. We can also model these dynamics using \textsf{GDEL}. Consider $\E=(E, Q, Q^+, pre)$ depicted as in Figure \ref{fig:EventHans}, which captures three aspects of the update: (i) Bob's announcement doesn't eliminate any possibilities (so the $Q$-relations are universal); (ii) if Alice believes $p$, then Bob's announcement doesn't make her consider new possibilities ($Q_a^+(e)=Q_a^+(g)=\emptyset$) (iii) if Alice believes $\neg p$, then Bob's announcement makes her consider $p$ possible without revising her beliefs about $q$ ($Q_a^+(f)=\{e\}$ and $Q_a^+(h)=\{g\}$). It's not hard to check that $\M\otimes\E=\N$. 
\begin{figure}[H]
    \centering
        \centering   
        \begin{tikzpicture}[node distance=2cm, ->, >=Stealth, thick]
        \tikzstyle{event} = [rectangle, draw, minimum size=0.85cm, inner sep=0.4pt, align=center, thick]
            \node[event, label=above:$e$] (e) {\(pq\)};
\node[event, label=above:$f$, right of=e] (f) {\(\neg pq\)};
            \node[event, label=above:$g$, right of=f] (g) {\(p\neg q\)};
            \node[event, label=above:$h$, right of=g] (h) {\(\neg p\neg q\)};
\path[->] (f) edge node[above] {$a+$} (e)
(h) edge node[above] {$a+$} (g);
    \end{tikzpicture}
    \caption{    
    The generalized event model $\E=(E,Q,Q^+,pre)$ representing Bob's announcement to Alice and Clark. As before, we omit the $Q$-edges, since $Q_i(x)=E$ for all events $x\in E$ and agents $i\in \{a,c\}$.  Each event's precondition is given by the conjunction of the literals listed inside the event. So for example event $e$ has precondition $p\wedge q$.}\label{fig:EventHans}
\end{figure}
\section{Conclusion and future work}

In this paper, we introduced a logic for belief contraction due to hedged public announcements, and then a generalized \textsf{DEL} theory for belief expansion and contraction. There are many questions left open for future work, such as: the full characterization of successful formulas for \textsf{HPAL}, the interaction between contraction and expansion operators, properties of a revision operator defined in terms of their combinations via the Levi identity \cite{sep-logic-belief-revision}, and the iterative properties of \textsf{GDEL}. Of special importance are the closure properties of \textsf{GDEL}. While updates with standard event models and plausibility event models preserve natural properties of accessibility relations, such as transitivity and Euclideanness, this is not true of generalized product updates, even if both the $Q$- and the $Q^+$-relations in the generalized event model are equivalence relations. This has the undesirable consequence that an agent who had full introspection of her own beliefs prior to the update may have higher-order uncertainties about her own beliefs after the update. While it is possible to enforce preservation of introspection in our framework, it is an open question for which class of generalized event models this can be guaranteed.

In addition to the above questions, we also plan to investigate how our model compares with alternative models of belief contraction \cite{leitgeb2007dynamic, girard2012general} as well as other models of information loss, such as forgetting and awareness growth \cite{ditmarsch2009awareness, benthem2010awareness}, and the dynamic approach to interpreting conditionals with modal antecedents. In particular, one may ask how expressive our generalized updates are with respect to plausibility updates. In one sense, we can transform any distinguished Kripke model $\mathcal{K}=(W, R, V)$ into any Kripke model $\mathcal{K'}=(W', R', V')$ with $W=W'$ and $V=V'$ by adding the edges in $\mathcal{K'}$ without keeping any edges in $\mathcal{K}$.\footnote{Recall that $\mathcal{K}=(W, R, V)$ is \textit{distinguished} if for every world $w$ there is a unique formula $\varphi$ such that $\M, w\vDash \varphi$. The simulation strategy is similar to the strategy used in \cite{ditmarsch2008semantic} to show that in \textsf{DEL} with postconditions, one can transform any Kripke model in any other Kripke model. The strategy there is also to get rid of the structure of the initial Kripke model and then reconstruct it by means of carefully devised event models with postconditions.} To this extent, our framework can emulate the belief dynamics that can be modeled in the plausibility framework, though we'll leave a detailed comparison to future works.

\subsubsection*{Acknowledgments}
We would like to acknowledge Hans van Ditmarsch for suggesting the paper's title and for very helpful comments on the paper, in particular for noticing that the axiomatization of \textsf{HPAL} was missing replacement of equivalents. Gaia would also like to acknowledge Thomas Bolander for very helpful initial discussions on this work and contraction in DEL. We also thank three anonymous reviewers for very helpful comments. Gaia Belardinelli is funded by Independent Research Fund Denmark (grant no.
4255-00020B).
\appendix
\section{Proofs}
Throughout, by \textit{logical equivalence} we mean provable in $\mathbf{K}$. We let $\llbracket\varphi\rrbracket=\{w\in W\colon\M,w\vDash \varphi\}$, and $\llbracket\varphi\rrbracket^{\div\psi}=\{w\in W^{\div\psi}\colon\M^{\div\psi},w\vDash \varphi\}$.

\bigskip
	\begin{definition}[Bisimulation, refinement, simulation]\label{def:bisimulation}
		Let $\M=(W,R,V)$ and $\M'=(W',R',V')$ be two Kripke models. A \emph{bisimulation} between $\M$ and $\M'$ 
		is a non-empty relation $Z\subseteq W \times W'$ such that for all $(w,w')\in Z$ and $a\in Ag$:
		
		\noindent \begin{itemize}
			\item[] (Atom): $w\in V(p)$ iff $w'\in V'(p)$, for all $p\in At$;
			\item[] (Forth): If $v\in R_a(w)$ then there exists $v'\in W'$ such that $v'\in R'_a(w')$ and $(v,v')\in Z$;
			\item[] (Back): If $v'\in R'_a(w')$ then there exists $v\in W$ such that $v\in R_a(w)$ and $(v,v')\in Z$;
		\end{itemize}
		When a bisimulation exists between two models, we say that they are \emph{bisimilar}. 
		A relation that satisfies Atom and Back is a \emph{refinement}. When a refinement exists between $\M$ and $\M'$, we say that $\M'$ is a \emph{refinement} of $\M$. A relation that satisfies Atom and Forth is a \emph{simulation}. When a simulation exists between $\M$ and $\M'$, we say that $\M'$ is a \emph{simulation} of $\M$.
	\end{definition}

\bigskip

\noindent\textit{Proof of Theorem 4.2.}
			    For soundness, the only non-trivial axiom is A5, namely $[\div\varphi]B_a\psi\leftrightarrow (B_a[\div\varphi]\psi\wedge (B_a\varphi\rightarrow \forall (\neg\varphi\rightarrow[\div\varphi]\psi)))$. Suppose $\M, w\vDash [\div\varphi] B_a\psi$. We claim that $\M, w\vDash B_a[\div\varphi]\psi$. Let $v\in R_a(w)$. Then $v\in R_a^{\div\varphi}(w)$. By assumption, $\M^{\div\varphi},w\vDash B_a\psi$. So $\M^{\div\varphi}, v\vDash \psi$. Thus $\M, v\vDash [\div\varphi]\psi$. So $\M, w\vDash B_a[\div\varphi]\psi$. Next, we claim that, if $\M, w\vDash B_a\varphi$ and $\M, w\vDash[\div\varphi]B_a\psi$, then for every $v\in W$, $\M, v\vDash \neg\varphi\rightarrow [\div\varphi]\psi$. Let $v\in W$. Suppose $\M, v\vDash \neg\varphi$. Since $\M, w\vDash B_a\varphi$, $v\in R^{\div\varphi}(w)$. Since $\M, w\vDash [\div\varphi]B_a\psi$, $\M^{\div\varphi}, v\vDash \psi$. So $\M, v\vDash [\div\varphi]\psi$.

    Conversely, suppose $\M, w\vDash B_a[\div\varphi]\psi\wedge (B_a\varphi\rightarrow \forall (\neg\varphi\rightarrow [\div\varphi]\psi))$. There are two cases, either $\M, w\vDash \neg B_a\varphi$ or $\M, w\vDash B_a\varphi$. Suppose $\M, w\vDash \neg B_a\varphi$. Let $v\in R^{\div\varphi}(w)$. Then $v\in R_a(w)$. Since $\M, w\vDash B_a[\div\varphi]\psi$, we have $\M, v\vDash [\div\varphi]\psi$. So $\M^{\div\varphi}, v\vDash \psi$. Thus $\M, w\vDash [\div\varphi]B_a\psi$. Now suppose $\M, w\vDash B_a\varphi$. So $\M, w\vDash \forall (\neg\varphi\rightarrow [\div\varphi]\psi)$. Let $v\in R^{\div\varphi}(w)$. Either $v\in R_a(w)$ or $\M, v\vDash \neg\varphi$. In the first case, by the assumption that $\M, w\vDash B_a[\div\varphi]\psi$, we have $\M^{\div\varphi}, v\vDash \psi$. In the second case, since $\M, w\vDash \forall (\neg\varphi\rightarrow [\div\varphi]\psi)$, we have $\M, v\vDash [\div\varphi]\psi$ or equivalently $\M^{\div\varphi},v\vDash\psi$. So $\M, w\vDash [\div\varphi]B_a\psi$.

\smallskip
We now show that RE is validity preserving. Suppose $\vDash\varphi\leftrightarrow\psi$. We proceed by induction on the complexity of the environment $\chi$. If $\chi$ is a propositional variable $p\in At$, then $\chi[\varphi/p]=\varphi$ and $\chi[\psi/p]=\psi$. So $\vDash\chi[\varphi/p]\leftrightarrow \chi[\psi/p]$ by assumption. The cases of negation and conjunction follow from propositional logic. Suppose $\vDash \chi[\varphi/p]\leftrightarrow\chi[\psi/p]$. Then
\begin{table}[H]
    \centering
    \begin{tabular}{ccc}
      $\M, w\vDash B_a\chi[\varphi/p]$   & iff & for all $ v\in R_a(w)$ $\M,v\vDash \chi[\varphi/p]$  \\
         & iff & for all $ v\in R_a(w)$ $\M,v\vDash \chi[\psi/p]$  \\
         & iff &  $\M, w\vDash B_a\chi[\psi/p]$ 
    \end{tabular}
\end{table}
A similar argument shows that $\vDash \forall\chi[\varphi/p]\leftrightarrow\forall \chi[\psi/p]$. 
Let $\lambda\in\Lang^{\div}$. Then
\begin{table}[H]
    \centering
    \begin{tabular}{ccc}
      $\M, w\vDash [\div\lambda]\chi[\varphi/p]$   & iff & $\M^{\div\lambda},w\vDash \chi[\varphi/p]$  \\
         & iff & $\M^{\div\lambda},w\vDash \chi[\psi/p]$  \\
         & iff &  $\M, w\vDash [\div\lambda]\chi[\psi/p]$ 
    \end{tabular}
\end{table}
     Lastly, for any Kripke model $\M=(W, R, V)$, we claim that $\M^{\div\chi[\varphi/p]}=\M^{\div\chi[\psi/p]}$. It suffices to show that $R_a^{\div\chi[\varphi/p]}(w)=R_a^{\div\chi[\psi/p]}(w)$ for all $w\in W$ and $a\in Ag$. Suppose $\M, w\vDash \neg B_a\chi[\varphi/p]$. Then by the inductive hypothesis, $\M, w\vDash \neg B_a\chi[\psi/p]$ and so $R_a^{\div\chi[\varphi/p]}(w)=R_a(w)=R_a^{\div\chi[\psi/p]}(w)$. Suppose $\M, w\vDash B_a\chi[\varphi/p]$. Then by IH, $\M, w\vDash B_a\chi[\psi/p]$ and
\begin{table}[H]
    \centering
    \begin{tabular}{ccc}
      $R_a^{\div\chi[\varphi/p]}(w)$   & $=$ & $R_a(w)\cup\{v\in W: \M, v\vDash \neg \chi[\varphi/p]\}$\\
      & $=$ & $R_a(w)\cup\{v\in W: \M, v\vDash \neg \chi[\psi/p]\}$\\
      & $=$ & $R_a^{\div\chi[\psi/p]}(w)$
    \end{tabular}
\end{table}
So $\M^{\div\chi[\varphi/p]}=\M^{\div\chi[\psi/p]}$. Thus, for any Kripke model $\M$ and world $w$, 
\begin{table}[H]
    \centering
    \begin{tabular}{ccc}
   $\M, w\vDash [\div\chi[\varphi/p]]\lambda$ & iff &  $\M^{\div\chi[\varphi/p]}, w\vDash\lambda$  \\
         & iff &  $\M^{\div\chi[\psi/p]}, w\vDash\lambda$\\
         & iff &  $\M, w\vDash [\div\chi[\psi/p]]\lambda$
    \end{tabular}
\end{table}
     Given the reduction axioms, every formula with dynamic modalities is provably equivalent in      \textsf{HPAL} to a formula without dynamic modalities. So the completeness of      \textsf{HPAL} follows from the completeness of the basic doxastic logic with universal modalities.      \hfill$\Box$

 \bigskip
\noindent \textit{Proof of Theorem~5.5.}
 Suppose $Ag=\{a\}$. Let $\varphi\in\Lang$. Let $\M=(W, R, V)$ be a Kripke model that validates \textbf{S5}. 
 Suppose $\M, w\vDash \varphi$. Since $R_a$ is an equivalence relation, for all $v\in R_a(w)$, $\M, v\vDash \widetilde{B}_a\varphi$. Then $R_a(v)=R_a^{\div\neg\varphi}(v)$ for all $v\in R_a(w)$. Consider $Z=\{(v,v): v\in R_a(w)\}$.  We claim that $Z$ is a bisimulation between $(\M, w)$ and $(\M^{\div\neg\varphi}, w)$. Clearly the two worlds satisfy the same proposition atoms. Let $(v,v)\in Z$. Then $v\in R_a(w)$. Suppose $u\in R_a(v)$. Since $R_a$ is an equivalence relation, we have $u\in R_a(w)$ and so $(u,u)\in Z$. Moreover, since $R_a(v)=R_a(w)=R_a^{\div\neg\varphi}(w)=R_a^{\div\neg\varphi}(v)$, we have $u\in R_a^{\div\neg\varphi}(v)$. Similarly, suppose $u\in R_a^{\div\neg\varphi}(v)$. Then $u\in R_a^{\div\neg\varphi}(w)=R_a(w)=R_a(v)$. So $Z$ is a bisimulation between $(\M, w)$ and $(\M^{\div\neg\varphi}, w)$. Thus, $\M^{\div\neg\varphi}, w\vDash \varphi$. \hfill$\Box$

\bigskip
\noindent \textit{Proof of Proposition~6.3}. 
Let $\varphi,\psi,\chi, \lambda\in\Lang^\div$ with $\chi, \lambda$ propositional and consider a Kripke model $\M=(W,R,V)$ and a world $w$ in it. Notationally, let $\M^{\div\varphi}=(W^{\div\varphi},R^{\div\varphi},V^{\div\varphi})$,
$\llbracket\psi\rrbracket =\{w\in W: \M, w\vDash\psi\}$ and
$\llbracket \psi \rrbracket^{\div\varphi}=\{w\in W: \M^{\div\varphi}, w\vDash \psi\}$.\begin{enumerate}
\item \textit{Propositional Inclusion}: Assume that $\M,w\vDash [\div\varphi]B_a\chi$. Then $\M^{\div\varphi},w\vDash B_a\chi$. So $R_a^{\div\varphi}(w)\subseteq \llbracket \chi \rrbracket^{\div\varphi}$. By definition of contraction $R_a(w)\subseteq R_a^{\div\varphi}(w)\subseteq \llbracket \chi \rrbracket^{\div\varphi}$. As $\chi$ is propositional, $\llbracket \chi \rrbracket=\llbracket \chi \rrbracket^{\div\varphi}$. Hence, $\M,w\vDash B_a\chi$.

\item \textit{Propositional Conjunctive Inclusion}: Suppose $\M,w\vDash [\div \lambda\wedge\psi]\neg B_a\lambda$ and $\M,w\vDash[\div\lambda\wedge\psi] B_a\chi$. There are two cases: either $\M,w\vDash\neg B_a(\lambda\wedge \psi)$ or $\M,w\vDash B_a(\lambda\wedge \psi)$. Suppose the first. Then $R_a^{\div\lambda\wedge\psi}(w)=R_a(w)$. Since $\M,w\vDash[\div\lambda\wedge\psi]\neg B_a\lambda$, there exists $v\in R_a^{\div\lambda\wedge\psi}(w)$ with $\M^{\div\lambda\wedge\psi},v\vDash \neg \lambda$. As $\lambda$ is propositional, then $\M,v\vDash \neg \lambda$, and since $R_a^{\div\lambda\wedge\psi}(w)=R_a(w)$ then $\M,w\vDash \neg B_a\lambda$. So $R_a^{\div \lambda}(w)=R_a(w)$. Hence, $R_a^{\div\lambda\wedge\psi}(w)=R_a^{\div \lambda}(w)$. So $\M,w\vDash [\div\lambda\wedge\psi] B_a\chi$ implies $R_a^{\div \lambda}(w)=R_a^{\div\lambda\wedge\psi}(w)\subseteq \llbracket\chi\rrbracket^{\div\lambda\wedge\psi}=\llbracket\chi\rrbracket^{\div \lambda}$, where the last identity holds as $\chi$ is propositional. Hence, $\M,w\vDash [\div \lambda] B_a\chi$. Now suppose the second case holds. Then $\M,w\vDash B_a\lambda$, and so $R^{\div\lambda}_a(w)=R_a(w)\cup\llbracket\neg\lambda\rrbracket \subseteq R_a(w)\cup \llbracket\neg\lambda\vee\neg\psi\rrbracket =R_a^{\div(\lambda\wedge \psi)}(w)$. As by assumption $\M,w\vDash[\div\lambda\wedge\psi] B_a\chi$, then $R_a^{\div(\lambda\wedge \psi)}(w)\subseteq \llbracket \chi\rrbracket^{\div(\lambda\wedge \psi)}=\llbracket \chi\rrbracket^{\div\lambda}$. Hence, $R^{\div\lambda}_a(w)\subseteq\llbracket \chi\rrbracket^{\div\lambda}$ and so $\M,w\vDash[\div \lambda] B_a\chi$.

\item \textit{Propositional Conjunctive Overlap}. Suppose that $\M,w\vDash [\div \varphi] B_a\chi \wedge [\div \psi] B_a\chi$. Then $R_a^{\div\varphi}(w)\subseteq \llbracket\chi \rrbracket^{\div\varphi}$ and $R_a^{\div\psi}(w)\subseteq \llbracket\chi \rrbracket^{\div\psi}$.
As $\chi$ is propositional, $\llbracket\chi \rrbracket^{\div\varphi}=\llbracket\chi \rrbracket^{\div\psi}=\llbracket\chi \rrbracket^{\div\varphi\wedge\psi}$, and so $(R_a^{\div\varphi}(w)\cup R_a^{\div\psi}(w))\subseteq \llbracket\chi \rrbracket^{\div\varphi\wedge\psi}$. 
Two cases: either $\M,w\vDash B_a(\varphi\wedge\psi)$ or not. If the first, then $\M,w\vDash B_a\varphi$ and $\M,w\vDash B_a\psi$. So $R_a^{\div\varphi}(w)=R_a(w)\cup \llbracket\neg\varphi\rrbracket$ and $R_a^{\div\psi}(w)= R_a(w)\cup \llbracket\neg\psi\rrbracket$ and $R_a^{\div\varphi\wedge\psi}(w)= R_a(w)\cup \llbracket\neg\varphi\vee\neg\psi\rrbracket=R_a(w)\cup \llbracket\neg\varphi\rrbracket\cup\llbracket\neg\psi\rrbracket$. Then, $R_a^{\div\varphi\wedge\psi}(w)= R_a^{\div\varphi}(w)\cup R_a^{\div\psi}(w)$, and so $R_a^{\div\varphi\wedge\psi}(w)\subseteq \llbracket\chi \rrbracket^{\div\varphi\wedge\psi}$.  Hence, $\M,w\vDash [\div \varphi\wedge \psi] B_a\chi $. Suppose instead that $\M,w\vDash \neg B_a(\varphi\wedge\psi)$. Then, $R_a^{\div\varphi\wedge\psi}(w)= R_a(w)$, and since by initial assumption $R_a^{\div\varphi}(w)\subseteq \llbracket\chi \rrbracket^{\div\varphi\wedge\psi}$ and by definition of contraction $R_a(w)\subseteq R_a^{\div\varphi}(w)$, then $R_a(w)\subseteq\llbracket\chi \rrbracket^{\div\varphi\wedge\psi}$. Hence, $\M,w\vDash [\div \varphi\wedge \psi] B_a\chi$.\hfill$\Box$\end{enumerate}

\bigskip
\noindent \textit{Proof of Proposition~7.4}. Let $\M=(W,R,V)$ be a Kripke model. Notationally, we let $\E(\div\varphi)=(E,Q,Q^+,pre)$, $\M\otimes \E(\div\varphi)=(W^{\E(\div\varphi)},R^{\E(\div\varphi)},V^{\E(\div\varphi)})$ and $\M^{\div\varphi}=(W^{\div\varphi},R^{\div\varphi},V^{\div\varphi})$. The function $h\colon W^{\E(\div\varphi)}\rightarrow W^{\div\varphi}$, defined by $h((w,e))=w$, for all $w\in W$, is a bijection: first, notice that, by definition of $\E(\div\varphi)$, for each world $w\in W$ there is exactly one event $e\in E$ such that $\M,w\vDash pre(e)$, and so $(w,e)\in W^{\E(\div\varphi)}$ iff $w\in W$. Then notice that, by definition of $\div\varphi$-update, $W^{\div\varphi}=W$, and so $(w,e)\in W^{\E(\div\varphi)}$ iff $w\in W^{\div\varphi}$. We now show that $h$ preserves atomic valuations and accessibility relations. The atomic valuations are clearly preserved, as $V^{\E(\div\varphi)}(p)=V(p)=V^{\div\varphi}(p)$ for all $p\in At$. For the accessibility relations, let $a\in Ag$ and $(w,e)\in W^{\E(\div\varphi)}$. We want to show that $(v,f)\in R_a^{\E(\div\varphi)}(w,e)$ iff $v\in R^{\div\varphi}_a(w)$. We consider two cases: either $\M,w\not\vDash B_a\varphi$, or $\M,w\vDash B_a\varphi$. Suppose $\M,w\not\vDash B_a\varphi$. Then $\E(\div\varphi)$ is such that $Q_a(e)=E$ and $Q^+_a(e)=\emptyset$, and $\M^{\div\varphi}$ is such that $R_a(w)=R_a^{\div\varphi}(w)$. So we have that $(v,f)\in R^{\E(\div\varphi)}_a(w,e)$ iff (by definition of generalized product update) $v\in R_a(w)$ iff (by definition of $\div\varphi$-update) $v\in R^{\div\varphi}_a(w)$, as we wanted to show. Suppose instead $\M,w\vDash B_a\varphi$. Assume $(v,f)\in R^{\E(\div\varphi)}_a(w,e)$. Then either $v\in R_a(w)$ and $f\in Q_a(e)$, or $f\in Q^+_a(e)$. If $v\in R_a(w)$, then $v\in R^{\div\varphi}_a(w)$, as desired. If instead $f\in Q^+_a(e)$, then $\M,v\vDash \neg \varphi$ and so $v\in R^{\div\varphi}_a(w)$, as desired. For the other direction, assume $v\in R^{\div\varphi}_a(w)$. Then either $v\in R_a(w)$ or $\M,v\vDash \neg\varphi$. If $v\in R_a(w)$, then since $Q_a(e)=E$, $(v,f)\in R^{\E(\div\varphi)}_a(w,e)$. If instead $\M,v\vDash \neg\varphi$, then by $(v,f)\in W^{\E(\div\varphi)}$, $pre(f)\vDash \neg\varphi$ by definition of $\E(\div\varphi)$. Since $\M,w\vDash B_a\varphi$ then $Q^+_a(e)=\{g\in E\colon pre(g)\vDash \neg\varphi\}$, and so $(v,f)\in R^{\E(\div\varphi)}_a(w,e)$. Hence in all cases we have the desired.\hfill$\Box$


\bigskip
\noindent \textit{Proof of Theorem~7.7}. It is straightforward that the inference rules preserve validity. We check the soundness of A4 and A5. The soundness of the other axioms is immediate. Let $\M=(W,R,V)$ be a Kripke model, let $\E=(E,Q,Q^+,pre)$ be a generalized event model, and let $\M\otimes\E=(W^\E,R^\E,V^\E)$. 
	    \begin{itemize}
	        \item A4: Note that, if $\M, w\not\vDash pre(e)$, then the biconditional holds automatically at $(\M, w)$. Suppose $\M, w\vDash pre(e)$ and $\M, w\vDash [\E, e]\forall\varphi$. Then $\M\otimes\E, (w,e)\vDash\forall\varphi$, i.e. for all $(v,f)\in W^\E$, $\M\otimes\E, (v,f)\vDash\varphi$. In other words, for all $f\in E$ and $v\in W$ such that $\M, v\vDash pre(f)$, $\M, v\vDash[\E,f]\varphi$. So $\M,w\vDash\bigwedge_{f\in E}\forall[\E,f]\varphi$.
	
	        Conversely, suppose $\M, w\vDash pre(e)$ but $\M, w\vDash \neg [\E,e]\forall\varphi$, i.e. $\M\otimes\E, (w,e)\not\vDash\forall\varphi$. So there is $(v,f)\in W^\E$ such that $\M\otimes\E, (v,f)\not\vDash\varphi$. It follows that $\M, v\vDash \neg [\E,f]\varphi$ and so $\M, w\vDash\neg\bigwedge_{f\in E}\forall [\E, f]\varphi$.
	        \item A5: 
           Suppose $\M, w\vDash [\E, e]B_a\varphi\wedge pre(e)$. Let $f\in Q_a^+(e)$. Then for any $v\in W$, if $\M,v\vDash pre(f)$, we have $(v,f)\in R_a^\E((w,e))$ and so by assumption $\M^\E, (v,f)\vDash \varphi$. Thus $\M, w\vDash \bigwedge_{f\in Q_a^+(e)}\forall [\E,f]\varphi$. Let $f\in Q_a(e)$ and $v\in R_a(w)$. Then $(v,f)\in R_a^\E((w,e))$ and so $\M^\E, (v,f)\vDash \varphi$. Thus $\M, v\vDash[\E,f]\varphi$. So $\M, w\vDash \bigwedge_{f\in Q_a(e)}B_a[\E,f]\varphi$.

Conversely, suppose $\M, w\vDash pre(e)\rightarrow\bigwedge_{f\in Q_a^+(e)}\forall [\E,f]\varphi\wedge\bigwedge_{f\in Q_a(e)}B_a[\E,f]\varphi$ and let $(v,f)\in R_a^\E((w,e))$. Either $f\in Q_a^+(e)$ or $f\in Q_a(e)$ and $v\in R_a(w)$. In the first case, since $\M, w\vDash \forall [\E,f]\varphi$, we have $\M, v\vDash[\E,f] \varphi$. In the second case, since $\M, w\vDash B_a[\E,f]\varphi$, we have $\M, v\vDash [\E,f]\varphi$. So either way, $\M^\E, (v,f)\vDash\varphi$ and so $\M^\E, (w,e)\vDash B_a\varphi$. 

\item A6: The proof mainly requires showing that there exists an isomorphism between $(\M\otimes\E)\otimes\F=(W^{\E\F},R^{\E\F},V^{\E\F})$ and  $\M\otimes(\E\circ\F)=(W^{\E\circ\F},R^{\E\circ\F},V^{\E\circ\F})$. By invariance of truth under isomorphism, this will imply the desired. We use notation $\M\otimes\E=(W^{\E},R^{\E},V^{\E})$, and $\E=(E^\E,Q^\E,Q^{+\E},pre^\E)$ and $\F=(E^\F,Q^\F,Q^{+\F},pre^\F)$.
			
To show the existence of the isomorphism, let $h: W^{\E\F}\rightarrow W^{\E\circ\F}$ be defined by $h((w,e),f)=(w,(e,f))$. Clearly, $h$ is a bijection: $((w,e),f)\in W^{\E\F}$
iff $\M,w\models pre^\E(e)
\quad\text{and}\quad
\M\otimes\E,(w,e)\models pre^\F(f),$ iff
$\M,w\models pre^\E(e)\wedge[\E,e]pre^\F(f),
$ iff $(w,(e,f))\in W^{\E\circ\F}$.
It also preserves valuations, since for every atom $p\in At$, $((w,e),f)\in V^{\E\F}(p)$ iff $w\in V(p)$ iff $(w,(e,f))\in V^{\E\circ\F}(p)$. Finally, $h$ preserves accessibility relations. 
Let $((w',e'),f')\in R^{\E\F}_a(((w,e),f))$. We want to show that $(w',(e',f'))\in R^{\E\circ\F}_a((w,(e,f)))$. Let $((w',e'),f')\in R^{\E\F}_a(((w,e),f))$. Two cases: either $f'\in Q_a^{+\F}(f)$ or $(w',e')\in R^\E_a((w,e))$ and $f'\in Q_a^{\mathcal{F}}(f)$. If $f'\in Q_a^{+\F}(f)$ then $(e', f')\in Q^{+\E\circ\F}_a(e,f)$ and so $(w',(e',f'))\in R^{\E\circ\F}_a((w,(e,f)))$. If $(w',e')\in R^\E_a((w,e))$ and $f'\in Q_a^{\mathcal{F}}(f)$, then again two cases, either $e'\in Q_a^{+\E}(e)$, or $w'\in R_a(w)$ and $e'\in Q_a^{\E}(e)$. If $e'\in Q_a^{+\E}(e)$ then by $f'\in Q_a^{\F}(f)$ and definition of composition, $(e', f')\in Q^{+\E\circ\F}_a((e,f))$, which implies $(w',(e',f'))\in R^{\E\circ\F}_a((w,(e,f)))$. If $w'\in R_a(w)$ and $e'\in Q_a^{\mathcal{E}}(e)$, then by $f'\in Q_a^{\mathcal{F}}(f)$ and definition of composition we have $(e', f')\in Q^{\E\circ\F}_a((e,f))$, and so $(w',(e',f'))\in R^{\E\circ\F}_a((w,(e,f)))$. Hence, in all cases the desired holds. The other direction (that is, if $(w',(e',f'))\in R^{\E\circ\F}_a((w,(e,f)))$ then $((w',e'),f')\in R^{\E\F}_a((w,e),f)$) follows by a similar straightforward application of the definition of composition and generalized product update.

Thus, $h$ is an isomorphism. By invariance of truth under isomorphisms, for every $((w,e),f)\in W^{\E\F}$ and every $\varphi\in \Lang_{\text{DEL}}$, we have $((\M\otimes\E)\otimes\F,((w,e),f))\vDash\varphi$ iff $(\M\otimes(\E\circ\F),(w,(e,f)))\vDash\varphi$. It remains only to connect this with the truth conditions for the dynamic modalities. Let $w\in W$. If $\M,w\not\vDash pre^\E(e)$, then both $[\E,e][\F,f]\varphi$ and $[\E\circ\F,(e,f)]\varphi$ are true at $w$ vacuously, since $pre^{\E\circ\F}(e,f)=pre^\E(e)\wedge[\E,e]pre^\F(f)$. If $\M,w\vDash pre^\E(e)$ but $\M\otimes\E,(w,e)\not\vDash pre^\F(f)$, then again $[\E,e][\F,f]\varphi$ is true at $w$ vacuously, and so is $[\E\circ\F,(e,f)]\varphi$, since $\M,w\not\vDash pre^{\E\circ\F}(e,f)$. Finally, if $\M,w\vDash pre^\E(e)$ and $\M\otimes\E,(w,e)\vDash pre^\F(f)$, then both product points $((w,e),f)$ and $(w,(e,f))$ exist, and the desired equivalence follows from the isomorphism above. Hence, in all cases, $\M,w\vDash[\E,e][\F,f]\varphi$ iff $\M,w\vDash[\E\circ\F,(e,f)]\varphi$. Since $\M$ and $w$ were arbitrary, the equivalence is valid.
	    \end{itemize}
	    Completeness of \textsf{GDEL} follows from the completeness of \textsf{K} with the universal modality by standard reduction arguments.\hfill$\Box$

\bigskip
\noindent \textit{Proof of Theorem~7.8}. 
		Let $\M=(W,R,V)$ be a Kripke model and $\E=(E,Q,Q^+,pre)$ be a generalized event model. Notationally, let $\M\otimes \E=(W^\E,R^\E,V^\E)$. Our proof strategy is to define a submodel of $\M\otimes\E$ that serves as the intermediate  refinement step, and then show that $\M\otimes\E$ is a simulation of that model.
		Let $(W^{\E^-}, R^{\E^-},V^{\E^-})$ be a Kripke model such that $W^{\E^-}= W^\E$, $R_a^{\E^-}(w,e)=\{(v,f)\in W^\E\colon v\in R_a(w), f\in Q_a(e)\}$, for all $a\in Ag$, and $V^{\E^-}=V^\E$. This is a submodel of $\M\otimes \E$, only missing the edges between worlds that the $Q^+$ relation added. It is then clear that we recover the full $R^\E$ by taking the union of $R^{\E^-}$ with the set containing those edges, that is
				$R_a^{\E}(w,e)=R_a^{\E^-}(w,e)\cup \{(v,f)\in W^\E\colon f\in Q^+_a(e)\}$, for all $a\in Ag$ and $(w,e)\in W^\E$. 
		Now, $(W^{\E^-}, R^{\E^-},V^\E)$ is a standard \textsf{DEL} update, and it is a well known result that it is a refinement of $\M$ \cite{bozzelli2014refinement}. We now show that $\M\otimes\E$ is a simulation of $\M\otimes\E^-$, from which we can conclude that $\M\otimes\E$ is a simulation of a refinement of $\M$. Let 
		$Z\subseteq W^{\E^-}\times W^\E$ be such that $((w,e),(w,e))\in Z$. As $R^\E$ only adds edges to $R^{\E^-}$ without removing any, it clearly holds that if $(v,f)\in R^{\E^-}_a(w,e)$, then $(v,f)\in R^{\E}_a(w,e)$, and since $((v,f),(v,f))\in Z$ then $Z$ is a simulation. \hfill$\Box$
	\bibliography{bib}
    	\bibliographystyle{eptcs}
\end{document}